\setlist[itemize]{leftmargin=*}
\setlist[enumerate]{leftmargin=*}
\renewcommand{\S}{\mathcal{S}}
\newcommand{\D}{\mathcal{D}}
\newcommand{\B}{\mathcal{B}}
\newcommand{\C}{\mathcal{C}}
\newcommand{\Z}{\mathcal{Z}}
\newcommand{\U}{\mathcal{U}}
\newtheorem{prop}{Proposition}
\newtheorem{thm}{Theorem}
\newtheorem{defi}{Definition}
\newtheorem{lemma}{Lemma}
\title{Back to the Source: \\An Online Approach for Sensor Placement and Source Localization}
\author[1]{Brunella Spinelli}
\author[1]{L.~Elisa Celis}
\author[1]{Patrick Thiran}
\affil[1]{\'Ecole Polytechnique F\'ed\'erale de Lausanne (EPFL)}
\date{}
\begin{document}

\maketitle

\noindent\textbf{Abstract.} 

Source localization, the act of finding the originator of a disease or rumor in a network, has become an important problem in sociology and epidemiology. 
The localization is done using the infection state and time of infection of a few designated \emph{sensor} nodes; however, maintaining sensors can be very costly in practice. 
%
%

%
We propose the first online approach to source localization: 
%
We deploy \textit{a priori} only a small number of sensors (which  reveal if they are reached by an infection) and then iteratively choose the best location to place a new sensor in order to localize the source. 
This approach allows for source localization with a very small number of sensors; moreover, the source can be found while the epidemic is still ongoing. 
Our method applies to a general network topology and performs well even with random transmission delays.

\section{Introduction}\label{sec:intro}

Computer worms, or rumors spreading on social networks, often trigger the question of how to identify the source of an epidemic. 
%
This problem also arises in epidemiology, when health authorities investigate the origin of a disease outbreak.
%
The problem of \textit{source localization} has received considerable attention in the past few years; because of its combinatorial nature, it is inherently difficult: the infection of a few nodes can be explained by multiple and possibly very different epidemic propagations. Researchers have considered various models and algorithms that differ in the epidemic spreading model and in the information that is available for source localization. Such models are often not realistic, either because they rely on some strong assumptions about the epidemic features (tree networks, deterministic transmission delays, etc.) or because they require an overwhelming amount of information to localize the source.

The costs of retrieving information for source localization cannot be disregarded. Data collection is never free; moreover, due to privacy concerns, individuals are becoming aware of the value of their data and resistant to share it for free~\cite{guardian}. In the case of infectious diseases, performing the necessary medical exams and the subsequent data analysis on many suspected households or communities can be exorbitantly expensive, whereas the efficient allocation of resources can lead to enormous savings~\cite{somda2009cost}.

Driven by the demand for general models for source localization and by practical resource-allocation constraints, we adopt a very general setting in terms of the epidemic model and prior information available, and we focus on designing a resource-efficient algorithm for information collection and source localization. 
%

\medskip
\noindent\textbf{Our model.} We model the connections across which an epidemic can spread with an undirected network $\mathcal{G}(V, E)$ of size $N=|V|$.
Each edge $uv \in E$ is given a weight $w_{uv}\in \mathbb{R}^+$ that is the expected time required for an infection to spread from $u$ to $v$. The edge weights induce a distance metric $d$ on $\mathcal{G}$: $d(u,v)$ is the length of the shortest path from $u$ to $v$.

An epidemic spreads on $\mathcal{G}$ starting from a single source $v^\star$ at an \textit{unknown time} $t^\star$. The unknown source $v^\star$ is drawn from a prior distribution $\pi$ on $V$. At any time, a node can be in one of two states: \textit{susceptible} or \textit{infected}. If $u$ becomes infected at time $t_u$, a susceptible neighbor $v$ of $u$ will become infected at time $t_u + \theta_{uv}$, where $\theta_{uv}$ is a continuous random variable with expected value $w_{uv}$. 

When a node is chosen as a \textit{sensor}, it can reveal its infection state and, if it is infected, its infection time. In this work we have two different types of sensors: \textit{static}  sensors $\S$ and \textit{dynamic} sensors $\D$. Static sensors are placed \textit{a priori} in the network. They serve the purpose of detecting any ongoing epidemic and of triggering the source search process.  
When some static sensor $s_0\in \S$ gets infected, the epidemic is detected and the online placement of the dynamic sensors starts. 

\medskip
\noindent\textbf{Our results.}  
%
Most source-localization approaches assume that all available sensors are chosen \textit{a priori}, independently of any particular epidemic instance, and, commonly, the source can be localized only after the epidemic spreads across the entire network. Instead, we propose a novel approach where we start the source-localization process as soon as an epidemic is detected and we place dynamic sensors actively while the epidemic spreads. 

We approach the problem of source-localization asking the following question: \textit{Who is the most informative individual, given our current knowledge about the ongoing epidemic?} Indeed, depending on the particular epidemic instance, the infection time or the state of some individuals might be more informative than that of others, hence we want to observe them, i.e., to choose them as \textit{sensors}.

Our methods 
are practical because they apply to \textit{general graphs} and both  \textit{deterministic} and \textit{non-deterministic} settings. 
We validate our results with extensive experiments on synthetic and real-world networks. 
We experimentally show that, when we have a limited budget for the dynamic sensors, we dramatically outperform a static strategy with the same budget -- improving the success rate of finding the source from $\sim$5\% to $\sim$75\% of the time.
%
Moreover, when we are unconstrained by a budget, we can localize the source with few sensors:  
Many purely-static approaches to sensor placement require a large fraction of the nodes to be sensors (e.g., $>30\%$, see the discussion in Section~\ref{sec:related_work}), while our dynamic placement uses $\sim$3\% on all topologies (see Figure~\ref{fig:scatter}). 
%
Intuitively, the reason for these dramatic improvements is the dual approach of using static and dynamic sensors: Once a static sensor is infected, it effectively cuts down the network to a region of size $N/|\S|$ that  contains the source. Then, the $|\D|$ dynamic sensors only need to localize the source in this smaller network. 
%
Proving this formally would be an interesting direction for future work. 

We focus on studying source localization and dynamic sensor placement, assuming that a set of static sensors is \textit{given}.
We consider two objectives: first, under budget-constraints for the number of sensors, we are interested in minimizing the uncertainty on the identity of the source (i.e., the number of nodes that, given the available observations, have a positive probability of being the source); second, when the budget for sensors is not limited, we want to minimize the number of sensors needed to exactly identify the source.

 



%

\section{Preliminaries}\label{sec:preliminaries}

\subsection{Model Assumptions}
\noindent\textbf{What we assume.} We make the following assumptions.
\begin{enumerate}[label=(A.\arabic*)]
\setlength{\itemsep}{2pt}
\setlength{\parskip}{-3pt}
\setlength{\parsep}{0pt}
\item We assume that the network topology is known. This is a common assumption when studying source localization (see, e.g., \cite{Shah, Altarelli2014, Pinto12, prakash, luo2013}).
\label{assume:network_known}
\item We assume that, when a node is chosen as dynamic sensor, it reveals its \textit{state} (healthy or infected). If it is infected, it also reveals the time at which it became infected. This is not a strong assumption because, by interviewing social-networks users (or, in the case of a disease, patients), the infection time of an individual can be retrieved~\cite{zhu2015}. 
\label{assume:inf_times}
\end{enumerate}
\noindent\textbf{What we do \textit{not} assume.} In order to obtain a tractable setting, much prior work has made assumptions which are not always feasible in practice and which we do not make. In particular, we do not make the following assumptions.
\begin{enumerate}[label=(B.\arabic*)]
\setlength{\itemsep}{2pt}
\setlength{\parskip}{-3pt}
\setlength{\parsep}{0pt}
\item Knowledge of the \textit{state of all the nodes} at a given point in time. This might be prohibitively expensive when one should maintain a very large number of monitoring systems~\cite{zhu2014}. Instead, we detect the source based on the infection time of a very small set of nodes.\label{assume:complete}
\item Knowledge of the \textit{time at which the epidemic starts}. This information is in most practical cases not available~\cite{jiang-survey, Pinto12}. Hence we do not make assumptions about the starting time of the epidemic.\label{assume:time}
\item Observation of \textit{multiple epidemics}. Observing multiple epidemics started by the same source  certainly helps in its localization~\cite{Pinto12, farajtabar2015}. In this work, we consider a single epidemic because we are interested in localizing the source \textit{while} the epidemic spreads.
\item A specific \textit{class of network topologies}. A large part of the literature assumes tree topologies. Having a unique path between any two nodes makes source localization much easier~\cite{jiang-survey}. Instead, our methods work on arbitrary graphs.\label{assume:topology}
\item \textit{Deterministic or discretized transmission delays}. When the transmission delays are deterministic, given the position of the source, the epidemic is deterministic. Hence, if the source is unknown, tracking back its position becomes much easier~\cite{celis16}. Also, assuming that infection times are discrete is limiting and may result in a loss of important information~\cite{cauchemez2008}. We assume transmission delays to be randomly drawn from continuous distributions with bounded support, which include deterministic delays as a particular case and can, in practice, approximate unimodal distributions with unbounded support (e.g., Gaussians).\label{assume:det_time}
\item A specific \textit{epidemic model}. Our method only uses the time of first-infection of the sensors (no assumption on recovery or re-infection dynamics is made). Hence, it can be applied to most epidemic models, including the well known SIS or SIR (provided that nodes do not recover before infecting their neighbors).
\end{enumerate}
%
\subsection{Model Description and Notation}
\noindent\textbf{Sensor Placement.} 
The set of static sensors is denoted by $\S$, with $|\S|=K_s$.
Let $\tau_0 \in \mathbb{R}$ be the first time at which a subset of static sensors $\S_0\subseteq \S$ are infected. At this time the placement of dynamic sensors starts. A new dynamic sensor is placed at each time ${\tau_i = \tau_0 + i\delta}$, $i\in \mathbb{N}^+$, where $\delta>0$ is called the \textit{placement delay}.

The $i^{th}$ dynamic sensor, i.e., the one placed at time $\tau_i$, is denoted by $d_i$. The set of dynamic sensors deployed in the network before or at step $i$ is denoted by $\D_i$. The number of dynamic sensors is limited by a \textit{budget} $K_d$, hence the maximum total number of sensors is $K_s + K_d$. If we do not have a limited budget for dynamic sensors, we trivially set $K_d=\infty$.
We stop adding dynamic sensors when the source is localized or when the number of dynamic sensors reaches the budget $K_d$.
The set of all static and dynamic sensors is denoted by $\U$. The cardinality of the latter set, $|\U|$, is the total number of sensors used in the localization process and is our metric for the \textit{cost} of localization.
\begin{table}
\begin{center}
\footnotesize
\begin{tabular}{l | l}
\multicolumn{2}{c}{\multirow{2}{*}{\textbf{Notation}}}\\
\multicolumn{2}{c}{}\\
\hline
\hline
$\mathbb{N}$ $(\mathbb{N}^+)$ & positive integers including (excluding) $0$\\
\hline
$\mathcal{G}(E, V)$& network\\
\hline
$w_{uv}$& weight of edge $(u,v)$\\
\hline
$\S$ & set of static sensors\\
\hline
$\D$ & set of dynamic sensors \\
\hline
$\U$& $\S \cup \D$\\
\hline
$K_s$& number of static sensors, $K_s=|\S|$\\
\hline
$K_d$& budget for the dynamic sensors\\
\hline
$\tau_0$ &time at which the epidemic is detected\\
\hline
\multirow{2}{*}{$\tau_i$, $i\in \mathbb{N}^+$}& time at which the $i^th$ dynamic sensor\\
& is placed\\
\hline
$\delta$& placement delay, $\tau_i-\tau_{i-1}=\delta$ $\forall i \in \mathbb{N}^+$\\
\hline
$\D_i$, $i\in \mathbb{N}^+$& set of dynamic sensors at time $\tau_i$ \\
\hline
$\mathcal{O}_i$, $i\in \mathbb{N}$& set of observations at time $\tau_i$\\
\hline
& observation of node $u_\omega$: \\
$\omega=(u_\omega, t_\omega)$& if $u_\omega$ is infected, $t_\omega$ is its infection time\\
& if $u_\omega$ is not infected, $t_\omega=\emptyset$\\
\hline
$\B_i$, $i\in \mathbb{N}$ & set of candidate sources given $\mathcal{O}_i$\\
\hline
$\C_i$, $i\in \mathbb{N}^+$ & set of candidate dynamic sensors at $\tau_i$\\
\hline
\hline
\end{tabular}
\end{center}
\end{table}
\normalsize

\medskip
\noindent\textbf{Positive and Negative Observations.} A sensor gives information in two possible ways: If it is infected, it reveals its infection time; otherwise it reveals that it is susceptible. In the first (respectively, second) case we say that the sensor gives a \textit{positive} (respectively, \textit{negative}) \textit{observation}.
We will see that an observation contributes to the localization process even if it is negative. 
We represent each observation $\omega$ as a tuple $(u_\omega, t_\omega)$ where $u_\omega\in V$ denotes the sensor and $t_\omega\in \mathbb{R}$ is its infection time if the observation is positive, whereas $t_\omega=\emptyset$ if the observation is negative.
%
%
For every step $i$ of the localization process, we denote the set of all observations (positive or negative) collected before or at time $\tau_i$ by $\mathcal{O}_i$. Specifically, 
$\mathcal{O}_0 = \{(s, \tau_0), s\in \S_0\} \cup \{(s, \emptyset), s\in \S\backslash \S_0\}$ and, for 
$i\in \mathbb{N}^+$, $\mathcal{O}_i\backslash\mathcal{O}_{i-1}$ contains the new observation of sensor $d_i$ and the positive observations (if any) of the previously placed sensors that get infected in $(\tau_{i-1}, \tau_i]$. Denoting with $\mathcal{I}_i$ the set of nodes which become infected in $(\tau_{i-1}, \tau_i]$ we have
\vspace{-5pt}
\begin{equation*}
\mathcal{O}_i\backslash\mathcal{O}_{i-1} =\Big\{(d_i, t_{d_i})\Big\}\bigcup \Big\{(u, t_u): u \in (\S\cup \D_{i-1}) \cap \mathcal{I}_i \Big\}.
\end{equation*}
\vspace{-15pt}

\medskip
\noindent\textbf{Candidate Dynamic Sensors.} The set of nodes among which we can choose a dynamic sensor at time $\tau_i$ is called $\C_i$. Clearly, ${\C_1=V\backslash \S}$ and, for $i \geq 2$, $\C_i=V\backslash (\S\cup \D_{i-1})$. 

\medskip
\noindent\textbf{Candidate Sources.} At step $i$, $v$ is a \textit{candidate source} if, conditioned on $\mathcal{O}_i$ it has a non-zero probability of being the source. $\B_i$ is the set of candidate sources at step $i$, i.e.,
\begin{equation}\label{eq:B_i}
\B_i \triangleq \{v \in V: \mathrm{P}(v = v^\star| \mathcal{O}_i) > 0 \}.
\end{equation}
In particular, the initial set of candidate sources is $$\B_0 = \{v \in V: \mathrm{P}(v=v^\star| \mathcal{O}_0) > 0 \}.$$

\medskip
\noindent\textbf{Double Metric Dimension.} Finally we recall the definition of Double Resolving Set (DRS) and Double Metric Dimension (DMD) of a network~\cite{Caceres07}, which will be useful in the following sections.

Given a network $\mathcal{G}(V, E)$, a DRS is a subset $\Z \subseteq V$ such that for every $v_1, v_2 \in V$ there exist $z_1, z_2 \in \Z$ such that
$d(v_1, z_1) - d(v_2, z_1) \neq d(v_1, z_2) - d(v_2, z_2)$, i.e., $v_1, v_2$ can be \emph{distinguished} based on their distances to $z_1, z_2$. We will use the following lemma~\cite{ChenHW14}.
\begin{lemma}\label{lemma:resolv}
Let $\Z$ be a $DRS$ containing $z^{\prime}$. Then, for every $v_1, v_2 \in V$ there exists $z^{\prime\prime} \in \Z$ such that $d(v_1, z^{\prime}) - d(v_2, z^{\prime}) \neq d(v_1, z^{\prime\prime}) - d(v_2, z^{\prime\prime}).$
\end{lemma}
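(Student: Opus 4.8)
The plan is to reduce the statement to a short case analysis on the ``distance-difference'' function. First I would fix an arbitrary pair $v_1,v_2\in V$, necessarily distinct (when $v_1=v_2$ the difference $d(v_1,z)-d(v_2,z)$ vanishes for every $z$, so the DRS property itself is vacuous on such pairs and there is nothing to prove). For this fixed pair I introduce the shorthand $\Delta(z)\triangleq d(v_1,z)-d(v_2,z)$, regarded as a function of $z\in\Z$. The goal is then simply to exhibit a $z''\in\Z$ with $\Delta(z')\neq\Delta(z'')$.

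Next I would invoke the definition of a DRS. Applied to the pair $v_1,v_2$, it guarantees the existence of two elements $z_1,z_2\in\Z$ with $\Delta(z_1)\neq\Delta(z_2)$; in other words, $\Delta$ is not constant on $\Z$. The remainder of the argument is a dichotomy comparing the fixed reference $z'$ against these two witnesses. If $\Delta(z')\neq\Delta(z_1)$, then taking $z''=z_1$ immediately gives $d(v_1,z')-d(v_2,z')\neq d(v_1,z'')-d(v_2,z'')$, as claimed. Otherwise $\Delta(z')=\Delta(z_1)$; combining this with $\Delta(z_1)\neq\Delta(z_2)$ yields $\Delta(z')\neq\Delta(z_2)$, so $z''=z_2$ works instead. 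In either branch a suitable $z''$ exists.

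I do not expect any genuine obstacle: the lemma is essentially the observation that a single fixed reference element of a resolving set, paired with an appropriately chosen second element, already distinguishes any given pair. The only point deserving care is that the chosen $z''$ must actually lie in $\Z$, which is automatic because it is selected among the witnesses $z_1,z_2$ supplied by the DRS property. This is what lets us later commit to always including a distinguished sensor $z'$ while still retaining full resolving power over arbitrary pairs.
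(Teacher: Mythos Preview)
Your argument is correct: once the DRS definition supplies two witnesses $z_1,z_2\in\Z$ with $\Delta(z_1)\neq\Delta(z_2)$, comparing $\Delta(z')$ to each of them forces at least one to differ from $\Delta(z')$, and that one serves as $z''$. The only caveat is the case $v_1=v_2$, where as you note both the DRS definition and the lemma are read under the implicit convention that the pair is distinct.

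Regarding comparison with the paper: the paper does not supply its own proof of this lemma. It simply states the result and attributes it to \cite{ChenHW14}. Your short case analysis is exactly the standard argument for this fact and is what one would expect to find in the cited reference, so there is no substantive divergence to discuss.
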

When an epidemic spreads on $\mathcal{G}$ and the transmission delays are deterministic, the infection times of a DRS suffice for distinguishing between any two possible sources~\cite{ChenHW14}. The minimum size of a DRS of $\mathcal{G}$ is called the DMD of $\mathcal{G}$. Computing the DMD of a network is NP-hard~\cite{ChenHW14}. Finding the set $\U$ of $k$ nodes that maximize the number of nodes that are distinguished by any two nodes in $\U$ is also a NP-hard problem to which we refer as $k$-DRS~\cite{celis16}. An approximate solution of $k$-DRS can be found with a natural greedy heuristic~\cite{celis16} (see Appendix~\ref{app:kdrs} for details). With a slight abuse of notation we denote by $k$-$DRS$ the set $\Z$, such that $|\Z|=k$, obtained via the latter heuristic.

\section{Online Sensor Placement \& \\ Source Localization}\label{sec:main}

\subsection{Deterministic Transmission Delays}\label{sec:det_case}

For ease of exposition, we first present our algorithm in the case of deterministic transmission delays, i.e., $\theta_{uv} = w_{uv}$. In Section~\ref{sec:noisy_case} we will show that our results naturally extend to random delays.

The following lemma formalizes that, when epidemics spread deterministically, the only source of randomness is the position of $v^\star$. 
\begin{lemma}\label{lemma:0-1}
Let $i \in \mathbb{N}^+$ and let $\mathcal{O}_i$ be the set of observations collected before or at $\tau_i$. Then, $\mathrm{P}(\mathcal{O}_i | v=v^\star) \in \{0, 1\}$.
\end{lemma}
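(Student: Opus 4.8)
The plan is to exploit the fact that deterministic transmission delays turn the entire epidemic into a deterministic function of the source location and the start time, so that conditioning on the source leaves no residual randomness in the spreading itself. First I would establish that if the source is $v$ and the epidemic starts at time $t$, then every node $u$ is infected at exactly time $t + d(v,u)$. This is a first-passage argument: since the delay along each edge $uv$ equals its weight $w_{uv}$, the infection reaches $u$ along the time-minimizing path from $v$, whose length is by definition the shortest-path distance $d(v,u)$. Hence the infection time of $u$ is $t + d(v,u)$, and its state at any query time $\tau_j$ (infected or susceptible) is determined by comparing $t + d(v,u)$ with $\tau_j$.

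Next I would translate the observation set $\mathcal{O}_i$ into constraints on the pair $(v,t)$. A positive observation $(u,t_u)$ forces $t_u = t + d(v,u)$, while a negative observation $(u,\emptyset)$ recorded at step $j\le i$ forces $t + d(v,u) > \tau_j$, i.e.\ $u$ has not yet been reached. I would then note that the dynamic sensors $d_1,\dots,d_i$ and their placement times $\tau_1,\dots,\tau_i$ are themselves deterministic functions of the earlier observations, so by induction on $i$ the whole nested sequence $\mathcal{O}_0\subseteq\cdots\subseteq\mathcal{O}_i$ is produced deterministically once $(v,t)$ is fixed. The key step is to argue that, conditioned on $v = v^\star$, the observed $\mathcal{O}_i$ pins down the start time: any positive observation (in particular the detection event at $\tau_0$) yields $t^\star = t_u - d(v,u)$, and this value is forced to coincide for every positive observation exactly when $v$ is consistent. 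Therefore, given $v = v^\star$, there is at most one admissible start time, and the event that $\mathcal{O}_i$ occurs is either consistent with that unique configuration, in which case it happens with probability $1$ because the evolution is deterministic, or inconsistent with $v^\star$ for every start time, in which case it happens with probability $0$. This gives $\mathrm{P}(\mathcal{O}_i \mid v = v^\star) \in \{0,1\}$.

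The main obstacle is precisely the unknown start time $t^\star$: if it were treated as an independent source of randomness, conditioning only on $v^\star$ would not obviously remove all uncertainty. The crux is to show that $t^\star$ is not a free degree of freedom once $\mathcal{O}_i$ is observed, since it is determined (up to outright inconsistency) by the absolute infection times contained in the observations, and hence contributes no genuine randomness to the conditional likelihood. I would make this rigorous by checking that all positive observations impose mutually compatible values of $t^\star$ whenever $v$ is admissible, and that the negative observations then either all hold or all fail deterministically. A secondary point to handle carefully is the inductive argument that the adaptive placement of $d_i$ depends only on $\mathcal{O}_{i-1}$, ensuring that no randomness slips in through the online choice of sensors.
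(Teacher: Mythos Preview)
The paper does not prove this lemma; it is stated without proof as a formalization of the sentence immediately preceding it (``when epidemics spread deterministically, the only source of randomness is the position of $v^\star$''). Your proposal supplies the argument the paper omits, and its core is sound: in the deterministic regime the infection time of every node $u$ is $t^\star + d(v^\star,u)$, the adaptive choices $d_1,\dots,d_i$ are deterministic functions of prior observations, and therefore $\mathcal{O}_i$ is a deterministic function of $(v^\star,t^\star)$. Your inductive point that the placement rule for $d_i$ depends only on $\mathcal{O}_{i-1}$ is the one substantive step, and it is correct.

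The one place where your write-up wobbles is the handling of $t^\star$. You flag it as the main obstacle and then argue that the positive observations pin it down, so it ``contributes no genuine randomness to the conditional likelihood.'' But this conflates two readings of the model. In the paper's setup $t^\star$ is a fixed unknown parameter, not a random variable; under that reading there is nothing to argue away, since fixing $v^\star$ already fixes the pair $(v^\star,t^\star)$ and hence the whole epidemic, and the $\{0,1\}$ conclusion is immediate. If instead $t^\star$ were a continuous random variable independent of $v^\star$, your argument would not yield probability $1$ in the consistent case: the event $\{t^\star = t_u - d(v,u)\}$ would have measure zero, so $\mathrm{P}(\mathcal{O}_i\mid v=v^\star)$ would be identically $0$, making the lemma true but useless for the Bayesian inversions the paper performs later. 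So the resolution is simply to read the model as the paper intends, with $t^\star$ non-random; your worry is well-placed but the fix is a modeling clarification rather than the ``pinning down'' argument you sketch.
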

Since the starting time $t^\star$ of the epidemic is unknown, no single observation taken in isolation is informative about the position of the source (see Assumption~\ref{assume:time}). Instead, two (or more) observations can become informative (which explains the importance of DMD and DRS for source localization). For this reason, we only consider the probability of two or more observations together. Let $\omega_1 \triangleq (u, t_u)$, and $\omega_2 \triangleq (w, t_w)$ two observations. If $t_u, t_w \neq \emptyset$, we define the event $\{\omega_1, \omega_2\}$ as
$\{\omega_1, \omega_2\} \triangleq \{v=v^\star: d(v, u) - d(v, w) = t_u - t_w\}.$
If $t_u \neq \emptyset$, $t_w = \emptyset$ and $j$ is the smallest integer such that $\omega_2 \in \mathcal{O}_j$ for $j\in\mathbb{N}^+$, i.e., $\omega_2\in \mathcal{O}_j \backslash \mathcal{O}_{j-1}$, we define $\{\omega_1, \omega_2\} \triangleq \{v=v^\star: d(v, u) - d(v, w) < t_u - \tau_j\}.$

We have the following lemma, which immediately follows from the definitions above.
\begin{lemma}\label{lemma:obs_cond}
Let $\omega_1 \triangleq (u, t_u)$, and $\omega_2 \triangleq (w, t_w)$ be two observations, then 
\begin{enumerate}[label=(\alph*)]
\item if $t_u, t_w \neq \emptyset$, then $\mathrm{P}(\{\omega_1, \omega_2\}|v=v^\star)=1$ if and only if
$d(v, u) - d(v, w) = t_u - t_w.$
\item if $t_u \neq \emptyset$, $t_w = \emptyset$ and $j$ is the smallest integer such that $\omega_2 \in \mathcal{O}_j$ for $j\in\mathbb{N}^+$, then $\mathrm{P}(\{\omega_1, \omega_2\}|v=v^\star)=1$ if and only if 
$d(v, u) - d(v, w) < t_u - \tau_j.$
\end{enumerate}
\end{lemma}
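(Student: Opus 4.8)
The plan is to reduce everything to the single degree of freedom that survives in the deterministic regime, namely the unknown start time $t^\star$, and then invoke Lemma~\ref{lemma:0-1} to convert the zero/one probability into a purely deterministic consistency condition. The key fact I would record first is that, conditioned on $v=v^\star$, the deterministic dynamics ($\theta_{uv}=w_{uv}$) force any node $x$ to become infected exactly at time $t^\star + d(v,x)$, since the shortest path realizes the first-infection time. Consequently a positive observation $(u,t_u)$ is consistent with source $v$ if and only if $t_u = t^\star + d(v,u)$, i.e. it pins down $t^\star = t_u - d(v,u)$; and a negative observation $(w,\emptyset)$ recorded at step $j$ is consistent if and only if $w$ is still susceptible at $\tau_j$, i.e. $t^\star + d(v,w) > \tau_j$.

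For part (a), I would apply this to both positive observations. Each of $\omega_1,\omega_2$ forces a value of the start time, $t^\star = t_u - d(v,u)$ and $t^\star = t_w - d(v,w)$ respectively, so the pair is jointly realizable under source $v$ precisely when these two values coincide, i.e. $t_u - d(v,u) = t_w - d(v,w)$, which rearranges to $d(v,u)-d(v,w)=t_u-t_w$. Conversely, if this equality holds I would exhibit the witness $t^\star = t_u - d(v,u)$ and verify that it also reproduces $t_w$, establishing existence of a consistent start time. Since Lemma~\ref{lemma:0-1} guarantees $\mathrm{P}(\{\omega_1,\omega_2\}\mid v=v^\star)\in\{0,1\}$, ``probability one'' is equivalent to ``jointly realizable'', which is exactly the stated distance condition.

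For part (b), the single positive observation $\omega_1$ still pins $t^\star = t_u - d(v,u)$, and I would substitute this value into the susceptibility constraint $t^\star + d(v,w) > \tau_j$ coming from the negative observation $\omega_2$. This yields $t_u - d(v,u) + d(v,w) > \tau_j$, i.e. $d(v,u)-d(v,w) < t_u - \tau_j$, matching the definition of the event $\{\omega_1,\omega_2\}$ in this case, and the converse direction simply reads the same equivalence backwards. As before, Lemma~\ref{lemma:0-1} turns realizability into the value one of the conditional probability.

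The computation is short, so the only places I would be careful are the bookkeeping steps: eliminating the unknown $t^\star$ correctly by using the positive observation to fix it, and getting the direction of the strict inequality right in part (b) (a negative observation is a lower bound on the would-be infection time of $w$, hence an upper bound on $d(v,u)-d(v,w)$). I would also make explicit that, because $t^\star$ is free, the ``if'' direction requires producing a consistent start time rather than merely checking one; the positive observation $\omega_1$ makes this automatic, since it determines $t^\star$ uniquely given $v$.
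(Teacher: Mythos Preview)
Your argument is correct, but it does more work than the paper requires. In the paper, the event $\{\omega_1,\omega_2\}$ is \emph{defined} directly by the distance condition: for two positive observations it is literally the set $\{v=v^\star:\ d(v,u)-d(v,w)=t_u-t_w\}$, and similarly for the mixed case. With that definition in hand, the lemma is a tautology: conditioning on $v=v^\star$, the event either contains this outcome or not, so its conditional probability is $1$ exactly when $v$ satisfies the defining condition. This is why the paper simply says the lemma ``immediately follows from the definitions above'' and gives no further proof.

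What you have written is, in effect, the justification for \emph{why} those definitions are the right ones: you interpret $\{\omega_1,\omega_2\}$ as ``both observations are jointly realizable under source $v$'' and then eliminate the free start time $t^\star$ to recover the distance (in)equality. That derivation is sound and arguably more informative, since it makes explicit the role of the unknown $t^\star$ and the deterministic first-infection identity $t_x=t^\star+d(v,x)$. The trade-off is that you are re-deriving the event description rather than reading it off; invoking Lemma~\ref{lemma:0-1} is also unnecessary here, because once the event is the indicator of a distance condition on $v$, the $0$/$1$ value is immediate without appealing to the general zero--one fact.
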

 
\medskip
\noindent\textbf{Algorithm description.} The key idea is to iteratively choose the most informative node as a dynamic sensor. At every step $i$, we first select as new dynamic sensor $d_i$ the node that maximizes the expected improvement (\textit{gain}) in the localization process; then, we compute $\B_{i}$ using the information given by the dynamic sensor $d_i$ and by the nodes in $\S\cup \D_{i-1}$ that became infected in $(\tau_{i-1}, \tau_i]$.
The pseudo-code for our algorithm is given in Algorithm \ref{algo}.  

\begin{algorithm}
\caption{Online Sensor Placement \& Source Localization}
\begin{algorithmic}
   \Require $K_d$ \textit{budget for dynamic sensors}
   \Require Set $\S$ of \textit{static sensors}, set $\mathcal{O}_0$ of \textit{initial observations}
   \State budget $\gets$ $K_d$
   \State $\B_0 \gets \textsc{InitializeCandSources}$($\S$, $\mathcal{O}_0$) \textit{cand. sources}
   \State $\C_1 \gets V \backslash \S$ \textit{candidate-sensors }
   \State $\D_0 \gets \{\}$, time $\gets \tau_0 + \delta$, $i \gets 1$
   \While{$|\B_{i-1}| > 1$ and budget $> 0$}
       \State $d_i \gets \textrm{arg max}_{c\in \C_{i}} \textsc{Gain}(c, \B_{i-1})$
       \State $\D_i \gets \D_{i-1} \cup \{d_i\}$
       \State $\mathcal{O}_{i+1} \gets \mathcal{O}_{i}$ $\cup$ $\{$new observations$\}$
       \State $\B_i \gets$ \textsc{Update}$(\B_{i-1}, \mathcal{O}_i)$
       \State $\C_{i+1} \gets \C_i \backslash d_i$
       \State time $\gets$ time $+\delta$, budget $\gets$ budget $-1$, $i \gets i+1$
   \EndWhile
   \State \textbf{return} $\B_{i-1}$
\end{algorithmic}\label{algo}
\end{algorithm}

The running time of Algorithm~\ref{algo} depends on the definition of \textsc{Gain} and will be discussed at the end of this section.
We describe the functions \textsc{InitializeCandSources}, \textsc{Update} and \textsc{Gain} in the following subsections.



\medskip
\noindent{\textbf{Initial Candidate-Sources Set $\B_0$.}} Based on the first observation available (i.e., the infection time $\tau_0$ of the first infected static sensors $\S_0 \subseteq \S$), the initial set of candidate sources $\B_0$ contains all nodes that are closer to $\S_0$ than to $\S\backslash \S_0$.
\begin{prop}\label{prop:first_step}
Let $\S_0$ be the set of the first infected static sensors and $\mathcal{O}_0$ be the first observation set. For every $v \in V$, let $\S_0^v$ be the set of the static sensors that are at minimum distance from $v$, i.e., $\S_0^v=\{s \in \S: d(v, s) = \min_{r\in \S} d(v,r)\}$. 
Then, $v \in \B_0$ if and only if $\pi(v)>0$ and $\S_0^v=\S_0$.
\end{prop}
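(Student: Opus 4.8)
The plan is to apply Bayes' rule to split the positivity of the posterior into two independent conditions, and then to use the rigidity of deterministic spreading to pin down the geometric characterization. Writing $\mathrm{P}(v=v^\star\mid \mathcal{O}_0) = \mathrm{P}(\mathcal{O}_0\mid v=v^\star)\,\pi(v)/\mathrm{P}(\mathcal{O}_0)$, I observe that $v\in\B_0$ if and only if \emph{both} $\pi(v)>0$ and $\mathrm{P}(\mathcal{O}_0\mid v=v^\star)>0$. The first factor is exactly the prior condition $\pi(v)>0$ appearing in the statement, so it remains to show that $\mathrm{P}(\mathcal{O}_0\mid v=v^\star)>0$ is equivalent to $\S_0^v=\S_0$. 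By Lemma~\ref{lemma:0-1}, $\mathrm{P}(\mathcal{O}_0\mid v=v^\star)\in\{0,1\}$, so this conditional probability is positive exactly when the observation $\mathcal{O}_0$ is \emph{consistent} with $v$ being the source for some start time.

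Next I would translate consistency into distance constraints. If $v=v^\star$ and the (unknown) start time is $t^\star$, deterministic delays force every sensor $s$ to be infected exactly at $t^\star+d(v,s)$. Recalling that $\mathcal{O}_0$ records the sensors in $\S_0$ as infected at $\tau_0$ and the sensors in $\S\setminus\S_0$ as still susceptible at $\tau_0$, consistency amounts to the existence of a value $t^\star$ with
\begin{align*}
t^\star + d(v,s) &= \tau_0 &&\text{for all } s\in\S_0,\\
t^\star + d(v,s) &> \tau_0 &&\text{for all } s\in\S\setminus\S_0.
\end{align*}
Setting $\rho \triangleq \tau_0 - t^\star$, the first line says every sensor in $\S_0$ is at distance exactly $\rho$ from $v$, and the second says every other static sensor is strictly farther than $\rho$; here I use that $\tau_0$ is by definition the \emph{first} infection time of any static sensor, which is what makes the inequality strict. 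Together these conditions say precisely that $\rho=\min_{r\in\S} d(v,r)$ and that the minimizing set is exactly $\S_0$, i.e. $\S_0^v=\S_0$.

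The two directions then follow. For the forward direction, if $v\in\B_0$ a consistent $t^\star$ exists, and the display above forces $\S_0^v=\S_0$ (while $\pi(v)>0$ is immediate). For the converse, given $\S_0^v=\S_0$ I would exhibit a witness start time by setting $t^\star\triangleq\tau_0-\min_{r\in\S}d(v,r)$; then the sensors in $\S_0$ are infected exactly at $\tau_0$ and all others strictly later, so $\mathcal{O}_0$ is realizable and $\mathrm{P}(\mathcal{O}_0\mid v=v^\star)=1$, which together with $\pi(v)>0$ gives $v\in\B_0$.

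The only delicate point is the correct bookkeeping of the free start time $t^\star$: because $t^\star$ is an unknown (Assumption~\ref{assume:time}), one must quantify over it existentially rather than treat it as fixed, and it is precisely the freedom to choose $t^\star=\tau_0-\rho$ that lets a minimum-distance configuration be realized. The accompanying subtlety is justifying the strict inequality for the negative observations, which relies on $\tau_0$ being the earliest infection time among \emph{all} static sensors; everything else is a direct unwinding of the deterministic propagation rule.
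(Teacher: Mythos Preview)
Your proof is correct and follows essentially the same approach as the paper: Bayes' rule to factor the posterior, Lemma~\ref{lemma:0-1} to reduce to a $0/1$ consistency check, and then the distance characterization $\S_0^v=\S_0$. You are in fact more explicit than the paper about the existential role of the unknown start time $t^\star$ and about the strict inequality for the negative observations; the only point the paper mentions that you leave implicit is that $\mathrm{P}(\mathcal{O}_0)>0$ (since $\mathcal{O}_0$ is an actually observed event in the deterministic setting), which is needed for the Bayes factorization to be valid.
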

\begin{proof}
By definition of $\B_0$, $v \in \B_0$ if and only if $\mathrm{P}(v=v^\star|\mathcal{O}_0) > 0$. In the deterministic setting any $\mathcal{O}_0$ collected from a given epidemic has non-zero probability, hence $\mathrm{P}(\mathcal{O}_0)>0$. Now, $$\mathrm{P}(v=v^\star|\mathcal{O}_0) = \mathrm{P}(\mathcal{O}_0|v=v^\star) \pi(v)/\mathrm{P}(\mathcal{O}_0) > 0$$ if and only if $\pi(v)>0$ and $\mathrm{P}(\mathcal{O}_0|v=v^\star)>0$. Hence, by Lemma~\ref{lemma:0-1}, $\mathrm{P}(\mathcal{O}_0|v=v^\star)=1$, which means that $v$ is at distance $\min_{r\in \S}d(v, r)$ from all static sensors in $\S_0$ and at distance larger than $\min_{r\in \S}d(v, r)$ from all nodes in $\S\backslash \S_0$, i.e., $\S_0^v = \S_0$.
\end{proof} 

\medskip
\noindent\contour{black}{\textsc{Update}}. We now show how the set of candidate sources is updated at every step.
\begin{lemma}\label{lemma:contained}
Let $i \in \mathbb{N}^+$. Then, $\B_i \subseteq \B_{i-1}$.
\end{lemma}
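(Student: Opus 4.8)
The plan is to characterize membership in $\B_i$ via Bayes' rule and Lemma~\ref{lemma:0-1}, and then exploit the monotone growth of the observation sets. First I would note that, exactly as in the proof of Proposition~\ref{prop:first_step}, for any $i$ we have $\mathrm{P}(\mathcal{O}_i) > 0$ (the observations come from an actual epidemic), so by Bayes' rule
\[
\mathrm{P}(v = v^\star \mid \mathcal{O}_i) = \frac{\mathrm{P}(\mathcal{O}_i \mid v = v^\star)\,\pi(v)}{\mathrm{P}(\mathcal{O}_i)}
\]
is positive if and only if $\pi(v) > 0$ and $\mathrm{P}(\mathcal{O}_i \mid v = v^\star) > 0$. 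By Lemma~\ref{lemma:0-1} the latter probability lies in $\{0,1\}$, so $v \in \B_i$ if and only if $\pi(v) > 0$ and $\mathrm{P}(\mathcal{O}_i \mid v = v^\star) = 1$. The identical characterization holds at step $i-1$. Thus it suffices to show that $\mathrm{P}(\mathcal{O}_i \mid v = v^\star) = 1$ implies $\mathrm{P}(\mathcal{O}_{i-1} \mid v = v^\star) = 1$.

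For this I would argue that observing $\mathcal{O}_i$ entails observing $\mathcal{O}_{i-1}$, so that the event associated with $\mathcal{O}_i$ is contained in the event associated with $\mathcal{O}_{i-1}$; monotonicity of probability then yields $\mathrm{P}(\mathcal{O}_i \mid v = v^\star) \le \mathrm{P}(\mathcal{O}_{i-1} \mid v = v^\star)$, and since the left-hand side equals $1$ so does the right-hand side. Concretely, each observation in $\mathcal{O}_{i-1}$ imposes a constraint on the source location through the pairwise conditions of Lemma~\ref{lemma:obs_cond}, and because $\mathcal{O}_{i-1} \subseteq \mathcal{O}_i$ every such constraint is still present at step $i$.

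The one point requiring care, and which I expect to be the main obstacle, is the treatment of observations that change type between the two steps: a sensor that is negative at step $i-1$, contributing $(w, \emptyset)$, may become infected in $(\tau_{i-1}, \tau_i]$ and contribute a positive observation $(w, t_w)$ to $\mathcal{O}_i$. I would show that consistency with the positive observation implies consistency with the earlier negative one. Indeed, if $(w,\emptyset)$ first appears at step $j \le i-1$, the negative constraint of Lemma~\ref{lemma:obs_cond}(b) only asserts that $w$ becomes infected strictly after $\tau_j$, whereas the positive observation pins its infection time to $t_w \in (\tau_{i-1}, \tau_i]$; since $\tau_j \le \tau_{i-1} < t_w$, any source satisfying the positive constraint automatically satisfies the negative one. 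Hence no source consistent with $\mathcal{O}_i$ can violate a constraint of $\mathcal{O}_{i-1}$, the claimed event containment holds, and $\B_i \subseteq \B_{i-1}$ follows.
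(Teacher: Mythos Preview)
Your proof is correct and follows essentially the same route as the paper's one-line argument, just spelled out in more detail via Bayes' rule and Lemma~\ref{lemma:0-1}. Note, though, that your final ``point requiring care'' is unnecessary: in the paper's convention the sets $\mathcal{O}_i$ grow monotonically (the notation $\mathcal{O}_i\backslash\mathcal{O}_{i-1}$ presupposes $\mathcal{O}_{i-1}\subseteq\mathcal{O}_i$), so a negative observation $(w,\emptyset)$ is never removed when $w$ later becomes infected---the positive observation is simply added alongside it.
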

\begin{proof}
Let $v \in \B_{i-1}$. Since $\mathcal{O}_{i-1} \subseteq \mathcal{O}_i$,  $\mathrm{P}(v = v^\star| \mathcal{O}_{i})>0$ implies $\mathrm{P}(v = v^\star | \mathcal{O}_{i-1})>0$ and, from \eqref{eq:B_i}, we have that 
$\B_i \subseteq \B_{i-1}$.
\end{proof}
Using Lemma~\ref{lemma:contained}, at step $i$, we compute the set of candidate sources $\B_i$ based on $\B_{i-1}$ and on $\mathcal{O}_{i}\backslash\mathcal{O}_{i-1}$. More specifically, in \textsc{Update} we compute $\B_i$ by applying Proposition~\ref{prop:compatibility}.
\begin{prop}\label{prop:compatibility}
Let $i\in\mathbb{N}^+$ and take $s_0 \in \S_0$ arbitrarily. Moreover, for $\omega \in \mathcal{O}_i\backslash\mathcal{O}_{i-1}$, define the set $\B_\omega^i$ as
\small
\begin{equation}\label{eq:B_omega_i}
\B_\omega^i \triangleq \begin{cases} 
\{v \in \B_{i-1}: d(u_\omega, v) - d(s_0,v) = t_\omega - \tau_0\}, & \textrm{if } t_\omega \neq \emptyset \\
\{ v \in \B_{i-1}: d(u_\omega, v) - d(s_0,v) > \tau_i - \tau_0\}, & \textrm{if } t_\omega = \emptyset.
\end{cases}
\end{equation}
\normalsize
Then, $\B_i = \bigcap_{\omega \in \mathcal{O}_i\backslash\mathcal{O}_{i-1}} \B_{\omega}^i$.
\end{prop}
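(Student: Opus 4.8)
The plan is to start, exactly as in the proof of Proposition~\ref{prop:first_step}, from the Bayesian characterization of the candidate set. Since the deterministic setting guarantees $\mathrm{P}(\mathcal{O}_i)>0$, Bayes' rule together with Lemma~\ref{lemma:0-1} gives that $v\in\B_i$ if and only if $\pi(v)>0$ and $\mathrm{P}(\mathcal{O}_i\mid v=v^\star)=1$. I would then split $\mathcal{O}_i=\mathcal{O}_{i-1}\cup(\mathcal{O}_i\backslash\mathcal{O}_{i-1})$ and observe that, by the same characterization applied at step $i-1$, the pair of conditions ``$\pi(v)>0$ and $\mathrm{P}(\mathcal{O}_{i-1}\mid v=v^\star)=1$'' is exactly ``$v\in\B_{i-1}$''. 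Because $\mathcal{O}_{i-1}\subseteq\mathcal{O}_i$, the joint event $\mathrm{P}(\mathcal{O}_i\mid v=v^\star)=1$ holds precisely when $v\in\B_{i-1}$ and, in addition, every new observation in $\mathcal{O}_i\backslash\mathcal{O}_{i-1}$ is consistent with $v$ being the source.

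The crux is to turn ``consistency of the whole of $\mathcal{O}_i$'' into a collection of pairwise conditions measured against a single reference. I would fix $s_0\in\S_0$; since $s_0$ is infected at $\tau_0$, the observation $(s_0,\tau_0)$ is positive, and under deterministic delays with $v=v^\star$ it pins down the otherwise unknown start time to $t^\star=\tau_0-d(s_0,v)$. Every other observation $\omega$ is then consistent with $v$ exactly when it agrees with this value of $t^\star$, which is precisely the pairwise event $\{(s_0,\tau_0),\omega\}$. Concretely, I would invoke Lemma~\ref{lemma:obs_cond} with $\omega_1=(s_0,\tau_0)$: part~(a), for a positive $\omega=(u_\omega,t_\omega)$, yields $d(u_\omega,v)-d(s_0,v)=t_\omega-\tau_0$; part~(b), for a negative $\omega$—noting that $i$ is the smallest index with $\omega\in\mathcal{O}_i$, so $j=i$—yields $d(s_0,v)-d(u_\omega,v)<\tau_0-\tau_i$, i.e.\ $d(u_\omega,v)-d(s_0,v)>\tau_i-\tau_0$. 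These are exactly the two cases in the definition~\eqref{eq:B_omega_i} of $\B_\omega^i$.

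Combining the two steps, $v\in\B_i$ if and only if $v\in\B_{i-1}$ and $v$ satisfies the defining condition of $\B_\omega^i$ for every $\omega\in\mathcal{O}_i\backslash\mathcal{O}_{i-1}$. Since each $\B_\omega^i$ is by construction a subset of $\B_{i-1}$, this is the same as $v\in\bigcap_{\omega}\B_\omega^i$, which is the claim. The index set is nonempty—it always contains at least the new sensor observation $(d_i,t_{d_i})$—so the intersection is well defined.

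I expect the main obstacle to be the reduction used in the middle paragraph: justifying that joint consistency of all observations is equivalent to pairwise consistency with the single reference $s_0$, rather than with all pairs. The underlying point is a transitivity argument—two observations that are both compatible with the common start time $t^\star=\tau_0-d(s_0,v)$ are automatically compatible with each other—so that $\B_{i-1}$ (which already encodes compatibility among the old observations) together with a comparison of each new observation to $s_0$ captures full consistency of $\mathcal{O}_i$. The places that need care are getting the inequality direction right in the negative case and confirming that the relevant index in Lemma~\ref{lemma:obs_cond}(b) is indeed $j=i$.
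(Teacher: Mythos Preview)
Your proposal is correct and follows essentially the same approach as the paper: both hinge on taking $(s_0,\tau_0)$ as a reference observation and using Lemma~\ref{lemma:obs_cond} to reduce joint consistency of $\mathcal{O}_i$ to pairwise consistency of each new observation with $s_0$, the ``transitivity'' you identify being exactly what the paper verifies in part~(A)(ii). The only organizational difference is that the paper first proves $\B_i=\B_\omega^i$ for a single new observation (cases~(A) and~(B)) and then intersects (case~(C)), whereas you treat all new observations at once via the pinned start time $t^\star=\tau_0-d(s_0,v)$; your route is slightly more direct but the content is the same.
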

\begin{proof}
The proof is decomposed in the following steps: 
\begin{enumerate}[label=(\Alph*)] 
\item $\mathcal{O}_i\backslash\mathcal{O}_{i-1}=\{\omega\}$, $t_\omega \neq \emptyset \Rightarrow \B_i=\B_\omega^i$
\item $\mathcal{O}_i\backslash\mathcal{O}_{i-1}=\{\omega\}$, $t_\omega = \emptyset \Rightarrow \B_i=\B_\omega^i$
\item $\B_i = \bigcap_{\omega \in \mathcal{O}_i\backslash\mathcal{O}_{i-1}} \B_{\omega}^i$.
\end{enumerate}
%
%
\begin{enumerate}[label=(\Alph*)]
\setlength{\itemsep}{3pt}
\setlength{\parskip}{0pt}
\setlength{\parsep}{0pt}
\item Let $\mathcal{O}_i\backslash\mathcal{O}_{i-1}=\{\omega\}$ and $t_\omega \neq \emptyset$.\label{first_part_proof}

(i) We show first that $\B_i \subseteq \B_\omega^i$. Let $\omega_0 \triangleq (s_0, \tau_0) \in \mathcal{O}_0$ and take $v \in \B_i$. Because of~\eqref{eq:B_i}, $\mathrm{P}(v=v^\star|\mathcal{O}_i)>0$. This implies that $\mathrm{P}(v=v^\star|\{\omega_0, \omega\})>0$. Applying Lemma~\ref{lemma:contained} recursively, we have that $v \in \B_0$ and therefore $\pi(v) > 0$ because of Prop.~\ref{prop:first_step}. With $\mathrm{P}(v=v^\star|\{\omega_0, \omega\})>0$, this implies that $\mathrm{P}(\{\omega_0, \omega\}|v=v^\star)>0$. By Lemma~\ref{lemma:0-1}, we have that $\mathrm{P}(\{\omega_0, \omega\}|v=v^\star)=1$. Hence $v$ satisfies $d(u_\omega, v) - d(s_0,v) = t_\omega - \tau_0$ and $v\in \B_\omega^i$.

(ii) We show that $\B_\omega^i \subseteq \B_i$. Let $v\in \B_\omega^i$. In order to show that $\mathrm{P}(v=v^\star|\mathcal{O}_i)>0$, it suffices to show that for any two observations $\omega_1, \omega_2 \in \mathcal{O}_i$, $\mathrm{P}(\{\omega_1, \omega_2\}|v=v^\star)=1$, since then we also have that $\mathrm{P}(\mathcal{O}_i|v=v^\star)=1$, which implies in turn that $\mathrm{P}(v=v^\star|\mathcal{O}_i)>0$ with a similar Bayesian argument as in the proof of Prop.~\ref{prop:first_step}. Therefore, we only have to prove that $\mathrm{P}(\{\omega_1, \omega_2\}|v=v^\star)=1$ for any $\omega_1, \omega_2 \in \mathcal{O}_i$. If $\omega_1, \omega_2 \in \mathcal{O}_{i-1}$, since $v\in \B_{i-1}$ because of~\eqref{eq:B_omega_i}, $\mathrm{P}(v=v^\star|\{\omega_1, \omega_2\})>0$, hence, as in~\ref{first_part_proof}(i), $\mathrm{P}(\{\omega_1, \omega_2\}|v=v^\star)=1$. 
Let us assume, without loss of generality that $\omega_1\triangleq (z, t_z) \in\mathcal{O}_{i-1}$ and $\omega_2 \triangleq \omega=(u_\omega, t_\omega)$. 
Then~\eqref{eq:B_omega_i} implies that 
\begin{equation}\label{eq:con_B_i_omega}
d(u_\omega, v) - d(s_0,v) = t_\omega - \tau_0,
\end{equation}
and two situations can arise depending on $t_z$.

\textit{a)} $t_z\neq \emptyset$.  Since $v\in \B_{i-1}$ and $\omega_1 \in \mathcal{O}_{i-1}$, by Lemmas \ref{lemma:0-1} and \ref{lemma:obs_cond}$, d(z, v) - d(s_0, v) = t_z - \tau_0$. Together with \eqref{eq:con_B_i_omega}, this implies that $d(u_\omega, v) - d(z, v) = t_\omega - t_z$ and, by Lemma~\ref{lemma:obs_cond} we conclude that $\mathrm{P}(\{\omega_1, \omega_2\}|v=v^\star)=1$. 

\textit{b)} $t_z=\emptyset$. Let $j\in \mathbb{N}$ be the smallest integer such that $\omega_1 \in \mathcal{O}_j$. Since $v\in \B_{i-1}$ and $\omega_1 \in \mathcal{O}_{i-1}$ we have by Lemmas \ref{lemma:0-1} and \ref{lemma:obs_cond} that $d(z, v) - d(s_0,v) > \tau_j - \tau_0$. Together with \eqref{eq:con_B_i_omega}, this implies $d(z, v) - d(u_\omega, v) > \tau_j - t_\omega$ and, by Lemma \ref{lemma:obs_cond}, we conclude that $\mathrm{P}(\{\omega_1, \omega_2\}|v=v^\star)=1$.\label{second_incl}
%
%
\item The proof follows similarly to \ref{first_part_proof}.
\item If $v\in \B_\omega^i$ for all $\omega\in \mathcal{O}_i\backslash\mathcal{O}_{i-1}$, by \eqref{eq:B_omega_i}, we have that $\mathrm{P}(\{\omega, \omega_0\}|v=v^\star)=1$ for all $\omega\in \mathcal{O}_i\backslash\mathcal{O}_{i-1}$. By a reasoning similar to~\ref{first_part_proof}(ii), this implies that $\mathrm{P}(\mathcal{O}_i|v=v^\star)=1$, hence $v \in \B_i$ and $\bigcap_{\omega \in \mathcal{O}_i\backslash\mathcal{O}_{i-1}} \B_{\omega}^i \subseteq \B_i$. Moreover, if $v \notin \bigcap_{\omega \in \mathcal{O}_i\backslash\mathcal{O}_{i-1}} \B_{\omega}^i$, then $\mathrm{P}(\{\omega, \omega_0\}|v=v^\star)=0$ for some $\omega \in \mathcal{O}_i\backslash\mathcal{O}_{i-1}$, hence $v \notin \B_i$.
\end{enumerate}
\end{proof}

\medskip
\noindent\textbf{Correctness of Algorithm~\ref{algo}.} We are now ready to prove the correctness of Algorithm~\ref{algo}, which, in fact, does not depend on the definition of $\textsc{Gain}$: As we will see in Section~\ref{sec:experiments}, \textsc{Gain} has an effect on the convergence speed of Algorithm~\ref{algo} but not on the localization of the source.

\begin{thm}\label{theo:correct}
Let the budget for the dynamic sensors be unrestricted ($K_d= \infty$). Algorithm~\ref{algo} always returns $\{v^\star\}$.
\end{thm}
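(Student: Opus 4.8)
The plan is to prove two complementary facts and combine them: the true source is never discarded, so $v^\star \in \B_i$ at every step, and with $K_d=\infty$ the loop necessarily halts with $|\B_i|=1$. Since the returned set is $\B_{i-1}$, these two facts force the output to be exactly $\{v^\star\}$.

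First I would show the \emph{invariance} $v^\star\in\B_i$ for all $i$. Because $v^\star$ is drawn from $\pi$ we have $\pi(v^\star)>0$, and because $\mathcal{O}_i$ is precisely the observation set produced by the deterministic epidemic started at $v^\star$, Lemma~\ref{lemma:0-1} gives $\mathrm{P}(\mathcal{O}_i\mid v^\star=v^\star)\in\{0,1\}$, with value $1$ since the event $\mathcal{O}_i$ actually occurred and hence cannot have probability $0$. Feeding this into the same Bayesian identity used in Proposition~\ref{prop:first_step}, $\mathrm{P}(v^\star=v^\star\mid\mathcal{O}_i)=\mathrm{P}(\mathcal{O}_i\mid v^\star=v^\star)\,\pi(v^\star)/\mathrm{P}(\mathcal{O}_i)>0$, so $v^\star\in\B_i$ by~\eqref{eq:B_i}. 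Together with Lemma~\ref{lemma:contained} this produces the nested chain $\{v^\star\}\subseteq\B_i\subseteq\B_{i-1}\subseteq\cdots\subseteq\B_0$, which in particular shows $v^\star$ can never be eliminated.

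Next I would prove \emph{termination} with a single survivor. Every iteration deletes one node from $\C_i$, so after at most $|V\setminus\S|$ steps each node of $V$ has been selected as a sensor, and since the delays are deterministic and the graph finite, after finite real time every node in the component of $v^\star$ is infected and contributes a \emph{positive} observation. The crux is then the observation that $V$ is itself a DRS: for distinct $v^\star$ and $v'$, taking the two sensors to be $v^\star$ and $v'$ themselves gives $d(v^\star,v^\star)-d(v^\star,v')=-d(v^\star,v')\neq d(v^\star,v')=d(v',v^\star)-d(v',v')$. Once both $v^\star$ and $v'$ are infected sensors, their recorded infection times satisfy $t_{v^\star}-t_{v'}=-d(v^\star,v')$, which by Lemma~\ref{lemma:obs_cond}(a) is incompatible with $v'$ being the source (the latter would require the difference to equal $d(v^\star,v')$); hence $v'$ fails the compatibility test of Proposition~\ref{prop:compatibility} and is removed from $\B$. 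As this argument applies to every $v'\neq v^\star$, the candidate set collapses to $\{v^\star\}$, the guard $|\B_{i-1}|>1$ fails, and Algorithm~\ref{algo} returns $\{v^\star\}$.

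The hard part will be reconciling the \emph{discrete, one-per-step} placement of sensors with the \emph{continuous-time} progression of the epidemic, i.e.\ guaranteeing that by the time $\C_i$ is exhausted the placed sensors have actually become infected and supply positive (not merely negative) observations. I would resolve this by noting that the stopping condition tests $|\B_i|$ rather than $\C_i$, and that as $\tau_i=\tau_0+i\delta$ grows each node of the infected component eventually switches from a negative to a positive observation, recorded in $\mathcal{O}_{j}\setminus\mathcal{O}_{j-1}$; so the fully infected configuration needed above is reached in finite time. Candidates lying outside the component of $v^\star$ are excluded separately, since as sources they would require infections in their own component that the observations contradict, and the first step already guarantees that no accumulation of negative observations can ever remove $v^\star$ itself.
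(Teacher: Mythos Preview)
Your argument is correct and shares the paper's two-part skeleton (invariance of $v^\star$, then eventual elimination of every $v'\neq v^\star$), but the elimination step takes a genuinely different route. The paper anchors everything at the already-observed sensor $s_0$ and invokes Lemma~\ref{lemma:resolv} (applied with $\Z=V$, $z'=s_0$) to produce, for each $v\neq v^\star$, a single witness $w$ whose positive observation directly violates the $s_0$-based condition in~\eqref{eq:B_omega_i}. You instead bypass Lemma~\ref{lemma:resolv} entirely and use $v^\star$ and $v'$ themselves as the resolving pair---a more elementary move that is, in fact, exactly the device the paper adopts later for the noisy Theorem~\ref{theo:correct_noise}. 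The paper's route dovetails cleanly with the $s_0$-anchored \textsc{Update} rule of Proposition~\ref{prop:compatibility}; yours must implicitly rely on the equivalence (established inside the proof of Proposition~\ref{prop:compatibility}) between passing the $s_0$-test and consistency with \emph{all} observation pairs.

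One remark on the ``hard part'' you flag. Your proposed resolution (keep iterating until all observations turn positive) does not quite close the gap, because once $\C_i=\emptyset$ the $\arg\max$ in the loop body is undefined and the algorithm cannot keep advancing $\tau_i$. The simpler fix is to note that as soon as a candidate $v$ is itself selected as a sensor and returns a \emph{negative} observation, the $t_\omega=\emptyset$ case of~\eqref{eq:B_omega_i} for $u_\omega=v$ reads $-d(s_0,v)>\tau_i-\tau_0>0$, which is impossible, so $v$ is removed on the spot. Hence by step $|V\setminus\S|$ every surviving candidate carries a positive self-observation; since $v^\star$'s self-observation is always positive ($t^\star<\tau_0\leq\tau_{j}$ at whatever step $j$ it is placed), your pair argument then forces $\B=\{v^\star\}$ no later than that step, before $\C$ is ever exhausted.
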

\begin{proof}
From Prop.~\ref{prop:first_step}, it follows that $v^\star \in \B_0$. Moreover, from Prop.~\ref{prop:compatibility}, it follows that $v^\star\in \B_i$ at every step $i$ of the algorithm.
Thus, it only remains to prove that we make progress, i.e., that for any $v \in \B_0 \backslash \{v^\star\}$, there is a step $i$ such that $v\notin \B_i$.
By Lemma~\ref{lemma:resolv}, for any $v \in \B_0 \backslash \{v^\star\}$ and $s_0 \in \S_0$, there exists $w \in V$ such that $d(v, w) - d(v^\star, w) \neq d(v, s_0) - d(v^\star, s_0)$. Let $i \in \mathbb{N}^+$ be the first step such that the infection time $t_w$ of $w$ satisfies $t_w\leq \tau_i$. 
Then, if $w\in \S$, we have $v\notin \B^i_{(w, t_w)}$ (where $\B^i_{(w, t_w)}$ is defined by \eqref{eq:B_omega_i}) and hence $v\notin \B_i$. If $w\notin \S$, let $j\in \mathbb{N}^+$ be the iteration step at which we choose $w$ as a sensor. Then, for $\ell=\max(i,j)$, $v\notin \B^\ell_{(w, t_w)}$, and hence $v\notin \B_\ell$.
\end{proof}

We know from Prop.~\ref{prop:compatibility} that every new observation potentially reduces the number of candidate sources and makes the localization progress. At each step of Algorithm~\ref{algo}, \textsc{Gain} evaluates the expected progress in localization for all candidate sensors and we choose as dynamic sensor the node that yields to the maximum value.
We consider three possible \textsc{Gain} functions: \textsc{Size-Gain}, \textsc{DRS-Gain} and \textsc{RC-Gain}.
It is not \textit{a priori} clear which version of \textsc{Gain} leads to a faster convergence. Hence, we experiment with all of them. 

\medskip
\noindent\contour{black}{\textsc{Size-Gain}}.  
Perhaps the most natural \textsc{Gain} function is the one that computes the expected reduction in the number of candidate sources. Call $\B_i^{(c)}$ the set of candidate sources after adding $c$ as dynamic sensor at step $i$. 
We define the \textsc{Size-Gain} of adding $c$ at step $i$ as $g^{\mathrm{SIZE}}_i(c) \triangleq \mathbf{E}[|\B_{i-1}| - |\B_{i}^{(c)}|]$. 
In practice, $g^{\mathrm{SIZE}}_i(c)$ can be easily computed by summing over the set $\mathcal{T}_i^c$ of the possible infection times for $c$ (see Definition~\ref{defi:times}). 

\begin{defi}\label{defi:times}
Let $i\in \mathbb{N}^+$ and $\C_i$ be the set of possible dynamic sensors at step $i$. Let $c\in \C_i$. Then,  
\begin{multline}
\label{eq:T_c}
\mathcal{T}_i^c \triangleq \{h \in (-\infty, \tau_i]: h = d(v, c) - d(v, s_0) - \tau_0 \\\hbox{ for some } v \in \B_{i-1}\}
\end{multline}
is the set of possible infection times of $c$ by step $i$.
\end{defi}
\begin{prop}\label{prop:gain}
Let $i\in \mathbb{N}^+$ and $\C_i$ be the set of possible dynamic sensors at step $i$. Let $c\in \C_i$. For $h\in \mathcal{T}_c$, define 
\begin{equation*}
\begin{split}
b_i(c, h) &\triangleq \{ v \in \B_{i-1}: \mathrm{P}(v=v^\star | t_c=h) > 0\} \\
&= \{ v \in \B_{i-1}: h = d(v, c) - d(v, s_0) + \tau_0\},\\
\tilde{b}_i(c) &\triangleq \{ v \in \B_{i-1}: \mathrm{P}(v=v^\star | t_c > \tau_i) > 0\} \\
&= \{ v \in \B_{i-1}: \tau_i < d(v, c) - d(v, s_0) + \tau_0\}.
\end{split}
\end{equation*} 
\begin{equation}\label{exp-gain}
\begin{split}
\mathrm{Then,}\hspace{2mm}g^{\mathrm{SIZE}}_i(c) = &\sum_{h \in \mathcal{T}_c}\pi(b_i(c, h))
\cdot (|\B_{i-1}| - |b_i(c, h)|) \\
&+ \pi(\tilde{b}_i(c)) \cdot (|\B_{i-1}| - |\tilde{b}_i(c)|).
\end{split}
\end{equation}
\end{prop}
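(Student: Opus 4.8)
The plan is to expand $g^{\mathrm{SIZE}}_i(c) = \mathbf{E}[|\B_{i-1}| - |\B_i^{(c)}|]$ via the law of total expectation, conditioning on the outcome of the single new observation produced by $c$. The whole argument rests on the fact that, in the deterministic setting, this observation is a deterministic function of $v^\star$, so the expectation collapses into a finite weighted sum.

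First I would pin down that function. Fix $s_0 \in \S_0$; since $s_0$ is infected at $\tau_0$, conditioning on $v^\star = v$ forces the unknown start time to be $t^\star = \tau_0 - d(v,s_0)$, so the infection time of $c$ is forced to be $h(v) \triangleq d(v,c) - d(v,s_0) + \tau_0$. This is exactly the content of Lemma~\ref{lemma:0-1}: every observation has likelihood $0$ or $1$, so once $v^\star$ is fixed the reading at $c$ is determined. Hence, if $v^\star = v$, then $c$ gives the positive observation $t_c = h(v)$ whenever $h(v) \le \tau_i$, and a negative observation otherwise.

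Next I would record that this outcome partitions $\B_{i-1}$. Each $v \in \B_{i-1}$ has a unique $h(v)$, so it lies in exactly one block: $b_i(c,h)$ for $h = h(v) \in \mathcal{T}_i^c$ (the case $h(v) \le \tau_i$), or $\tilde{b}_i(c)$ (the case $h(v) > \tau_i$). These blocks are pairwise disjoint and cover $\B_{i-1}$, which is precisely what Definition~\ref{defi:times} and the two set definitions in the statement encode. Applying Proposition~\ref{prop:compatibility} to the lone new observation $\omega = (c, t_c)$ then shows that on the event $\{t_c = h\}$ the updated set is exactly $\B_i^{(c)} = b_i(c,h)$, and on the event that $c$ is negative it is $\B_i^{(c)} = \tilde{b}_i(c)$. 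Thus on each block the reduction $|\B_{i-1}| - |\B_i^{(c)}|$ is constant, equal to $|\B_{i-1}| - |b_i(c,h)|$ or $|\B_{i-1}| - |\tilde{b}_i(c)|$ respectively.

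Finally I would take the expectation by summing each constant reduction against the probability of the block it sits on. Because the posterior $\mathrm{P}(v^\star = v \mid \mathcal{O}_{i-1})$ is, by Lemma~\ref{lemma:0-1} and a Bayes argument as in Proposition~\ref{prop:first_step}, supported on $\B_{i-1}$ and proportional to $\pi$ there, the probability of the outcome $\{t_c = h\}$ equals the mass $\pi(b_i(c,h))$ and the probability of a negative reading equals $\pi(\tilde{b}_i(c))$; substituting gives the claimed identity. The step I expect to require the most care is this last one: justifying that the outcome weights are genuinely $\pi(\cdot)$ of the blocks. This reduces to verifying that the posterior over $\B_{i-1}$ is proportional to the prior (a direct consequence of the $0/1$ likelihoods) and, depending on the convention fixed for $\pi(\cdot)$ on sets, either confirming $\pi(\B_{i-1})=1$ or carrying the normalizing constant $\pi(\B_{i-1})$ so that the probabilities over the partition sum to one.
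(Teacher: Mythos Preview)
The paper states Proposition~\ref{prop:gain} without proof, so there is no argument to compare against. Your approach is correct and is the natural one: in the deterministic regime the observation at $c$ is a deterministic function of $v^\star$, the sets $\{b_i(c,h)\}_{h\in\mathcal{T}_i^c}$ together with $\tilde{b}_i(c)$ partition $\B_{i-1}$, and on each block $|\B_{i-1}|-|\B_i^{(c)}|$ is constant; the law of total expectation then yields the sum.

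The only point that is not fully resolved is the one you already flag. The expectation defining $g^{\mathrm{SIZE}}_i(c)$ is taken under the posterior $\mathrm{P}(\cdot\mid\mathcal{O}_{i-1})$, which equals $\pi(\cdot)/\pi(\B_{i-1})$ on $\B_{i-1}$ by the $0/1$-likelihood argument. Hence the weight of the block $b_i(c,h)$ is $\pi(b_i(c,h))/\pi(\B_{i-1})$, not $\pi(b_i(c,h))$ in general. The formula as stated therefore implicitly reads $\pi(A)$ as $\pi(A\mid\B_{i-1})$ (equivalently, renormalizes $\pi$ to $\B_{i-1}$); this is consistent with the approximation $\mathrm{P}(v=v^\star\mid\mathcal{O}_{i-1})\approx\pi(v)/\pi(\B_{i-1})$ the paper uses later in Appendix~\ref{app:size_gain}. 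With that reading your proof is complete; without it a global factor $1/\pi(\B_{i-1})$ should be carried, which does not affect the $\arg\max$ in Algorithm~\ref{algo}.
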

%

\medskip
\noindent\contour{black}{\textsc{Drs-Gain}}. The definition of this \textsc{Gain} is inspired by the notion of DRS (see Section~\ref{sec:preliminaries}). 
After the first static sensor is infected, it is clearly possible to detect the source with at most DMD$(\B_0)$ additional observations.
Indeed, observing the infection times of a DRS of $\B_0$ removes all ambiguities about the source identity. \textsc{Drs-Gain} is a \textit{dynamic} greedy implementation of this observation, where at each step $i$ we choose the sensor that gives the most progress in forming a DRS of $\B_i$. 
%
%
Let $c\in \C_i$ and let $X_c = 1$ if there exists $v\in \B_{i-1}$ such that the infection time $t_c$ of $c$ is larger than $\tau_i$ (i.e., such that $d(v, c) - d(v, s_0) - \tau_0 > \tau_i$), $X_c = 0$ otherwise. Then, the value of \textsc{Drs-Gain} at step $i$ is 
\begin{equation}\label{eq:DRS_gain}
g^{\mathrm{DRS}}_i(c) \triangleq |\mathcal{T}_i^c| + X_c.
\end{equation}

Note that both \textsc{Size-Gain} and \textsc{DRS-Gain} account only for the benefit of adding the dynamic sensor $c$: For tractability, we ignore all observations $\omega \in \mathcal{O}_i \backslash \mathcal{O}_{i-1}$ such that $u_\omega \neq c$.

\medskip
\noindent\contour{black}{\textsc{RC-Gain}}. \textsc{RC-Gain} (\textit{Random-Candidate-}\textsc{Gain}) assigns gain $1$ to all candidates sources and gain $0$ to all nodes that are not candidate sources: At step $i$, for $c\in \C_i$ we set $g^\mathrm{RC}(c)=1$ if $c \in \B_{i-1}$, $g^\mathrm{RC}(c)=0$ otherwise. In other words, we randomly choose the dynamic sensors among the candidate sources. 
Note that if the infection time of at least one node in $\B_{i-1}$ is already observed, adding a sensor in any other node in $\B_{i-1}$ implies $|\B_{i}|\leq |\B_{i-1}|$. Hence, this very simple \textsc{Gain} ensure that the source-localization makes progress at each step. 





\medskip
\noindent\textbf{Running time.} In the worst case, the \textbf{while} loop of Algorithm~\ref{algo} is entered $N$ times. At step $i$, both the \textsc{Update} and the computation of any of the proposed \textsc{Gain} functions  takes $O(|\B_i|)$ steps. Hence, with the proposed definitions of $\textsc{Gain},$ the $i^{th}$ iteration takes $O(|\C_i| \cdot |\B_i|) \subseteq O(N^2)$. 
Although the running time can potentially reach $\Theta(N^3)$,
our experiments show that, in many practical cases, $|\B_i|$ is sublinear.

\subsection{Non-Deterministic Transmission Delays}\label{sec:noisy_case}


 
In this section we assume that the transmission delays are independent continuous random variables such that, for every $uv\in E$, the support of the transmission delay $\theta_{uv}$ is bounded and symmetric with respect to $w_{uv}$, i.e., is $[w_{uv}(1-\varepsilon), w_{uv}(1+\varepsilon)]$, with $\varepsilon \in [0,1]$.  
We refer to $\varepsilon$ as \textit{noise parameter}. For $\varepsilon >0$, the transmission delay over an edge of weight $w$ can deviate up to $\varepsilon w$ from its expected value. $\varepsilon = 0$ corresponds to the deterministic model of Section~\ref{sec:det_case}.  

The structure of the algorithm for sensor placement and source localization is identical to that of Algorithm~\ref{algo}, the only changes are in \textsc{InitializeCandSources} and \textsc{Update}.

The following proposition characterizes the candidate sources at step $i$ through necessary conditions. It is used in \textsc{InitializeCandSources} and in \textsc{Update} to discard, at step $i$, the nodes $v$ such that $\mathrm{P}(v=v^\star|\mathcal{O}_i)=0$.
\begin{prop}\label{prop:recall_1}
Let $s_0$ be the first infected sensor, that is infected at time $\tau_0$ and let $i \in \mathbb{N}^+$.
\begin{enumerate}
\item If $v \in \B_0$, then 
$$d(s_0, v) - \min_{s \in \S} d(v,s) \leq \varepsilon(d(s_0, v) + \min_{s \in \S}
d(v,s)).$$
\item Let $\omega_1, \omega_2 \in \mathcal{O}_i$ with $t_{\omega_i}\neq\emptyset$ for $i\in\{1,2\}$. If $v \in \B_i$, then 
\begin{multline}\label{eq:pos_comp}
|d(u_{\omega_2},v) - d(u_{\omega_1}, v) - t_{\omega_2} +t_{\omega_1}| \leq \\ \varepsilon(d(u_{\omega_1}, v) + d(u_{\omega_2}, v)).
\end{multline}
\item Let $\omega_1, \omega_2 \in \mathcal{O}_i$ with $t_{\omega_1}\neq\emptyset$, $t_{\omega_2}=\emptyset$ and let  $\omega_2 \in \mathcal{O}_i$. If $v\in \B_i$, then
%
\begin{multline}\label{eq:neg_comp}
\tau_i - t_{\omega_1} - d(u_{\omega_2}, v) + d(u_{\omega_1}, v) < \\ \varepsilon(d(u_{\omega_1}, v) + d(u_{\omega_2},
v)).
\end{multline}
\end{enumerate}
\end{prop}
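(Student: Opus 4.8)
The entire proposition follows from one two-sided bound on infection times, which I would prove first. \emph{Claim:} if $v=v^\star$ is the source infected at the unknown time $t^\star$ and a node $u$ becomes infected at time $t_u$, then
\begin{equation*}
(1-\varepsilon)\,d(v,u)\;\le\;t_u-t^\star\;\le\;(1+\varepsilon)\,d(v,u).
\end{equation*}
The lower bound holds because the infection reaches $u$ along the time-fastest path and \emph{every} path $P$ from $v$ to $u$ has actual length $\sum_{e\in P}\theta_e\ge(1-\varepsilon)\sum_{e\in P}w_e\ge(1-\varepsilon)\,d(v,u)$, using $\theta_e\ge w_e(1-\varepsilon)$ and that $d(v,u)$ is the minimum weighted path length. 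The upper bound needs only a single witness: along a shortest weighted path $P^\star$ the actual length is $\sum_{e\in P^\star}\theta_e\le(1+\varepsilon)\,d(v,u)$, and the true infection time can only be smaller. I expect this asymmetry --- the lower bound quantifying over all paths, the upper bound exhibiting one well-chosen path --- to be the main point; once the claim is established, the three inequalities are routine. Throughout, I would note that $v\in\B_0$ (resp.\ $v\in\B_i$) guarantees at least one realization in which $v$ is the source and produces $\mathcal{O}_0$ (resp.\ $\mathcal{O}_i$), so all bounds may be applied inside that single realization; since we only assert necessary conditions, no construction of realizations is required.

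For part~(1), I would use that $s_0$ is infected first, so $\tau_0\le t_{s^\star}$, where $s^\star$ achieves $\min_{s\in\S}d(v,s)$ and is eventually infected because it lies in the same component as $v$. Applying the lower bound at $s_0$ and the upper bound at $s^\star$ chains into $(1-\varepsilon)\,d(s_0,v)\le\tau_0-t^\star\le t_{s^\star}-t^\star\le(1+\varepsilon)\min_{s}d(v,s)$, in which $t^\star$ cancels; rearranging the outer inequality gives $d(s_0,v)-\min_s d(v,s)\le\varepsilon\big(d(s_0,v)+\min_s d(v,s)\big)$.

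For part~(2), writing $d_j\triangleq d(u_{\omega_j},v)$ and applying the claim to $u_{\omega_1}$ and $u_{\omega_2}$, subtracting the two bracketings eliminates $t^\star$ and places $t_{\omega_2}-t_{\omega_1}$ in the interval $[(1-\varepsilon)d_2-(1+\varepsilon)d_1,\,(1+\varepsilon)d_2-(1-\varepsilon)d_1]$. Substituting this interval into $d_2-d_1-(t_{\omega_2}-t_{\omega_1})$ collapses the distance terms and leaves exactly $\pm\varepsilon(d_1+d_2)$, which is the claimed absolute-value bound \eqref{eq:pos_comp}. For part~(3) the negative observation $\omega_2$ means $u_{\omega_2}$ is not infected by $\tau_i$, i.e.\ its infection time strictly exceeds $\tau_i$; combined with the upper bound this yields $\tau_i<t^\star+(1+\varepsilon)d(u_{\omega_2},v)$, while the positive observation $\omega_1$ gives $t^\star\le t_{\omega_1}-(1-\varepsilon)d(u_{\omega_1},v)$. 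Eliminating $t^\star$ between these two and rearranging produces the strict inequality \eqref{eq:neg_comp}. The only delicate point is this last step: the negative observation must be read as a \emph{strict} lower bound on an infection time and paired with the per-node \emph{upper} bound, which is precisely what keeps the final inequality strict.
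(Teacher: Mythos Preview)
Your proof is correct and follows exactly the route the paper intends: the paper's own proof is a single line, ``Follows from $\theta_{uv}\in[w_{uv}(1-\varepsilon),w_{uv}(1+\varepsilon)]$ for every $uv\in E$,'' and you have simply unpacked this into the natural two-sided bound $(1-\varepsilon)d(v,u)\le t_u-t^\star\le(1+\varepsilon)d(v,u)$ and then eliminated $t^\star$ pairwise. There is nothing to add; your write-up is strictly more informative than the paper's.
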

\begin{proof} Follows from $\theta_{uv} \in [w_{uv}(1-\varepsilon), w_{uv}(1+\varepsilon)]$ for every $uv\in E$. \end{proof}
Prop.~\ref{prop:recall_1} is similar in spirit to Prop.~\ref{prop:compatibility}. Note in particular, that by setting $\varepsilon=0$ in~\eqref{eq:pos_comp} and~\eqref{eq:neg_comp} we get, for two arbitrary observations $\omega_1, \omega_2 \in \mathcal{O}_i$, the respective of the conditions on the infection times used to define $\B_\omega^i$ in \eqref{eq:B_omega_i}. 
However, differently from Prop.~\ref{prop:compatibility}, when $\varepsilon > 0$, we cannot give necessary and sufficient conditions for a node to be the source by simply comparing all observations with a reference observation. 
Hence, when $\varepsilon >0$, at step $i$ the function \textsc{Update} keeps in $\B_i$ only the nodes such that both~\eqref{eq:pos_comp} and~\eqref{eq:neg_comp} hold for any $\omega_1, \omega_2 \in \mathcal{O}_i$. This increases the running time of iteration $i$ by at most $O(\S\cup \D_{i})$.



\begin{table*}
\begin{center}
\begin{tabular}{l c c c c c c c c} 
  & \textbf{ER}& \textbf{BA} & \textbf{RGG} & \textbf{RT} & \textbf{PLT} & \textbf{FB} & \textbf{U-WAN} & \textbf{WAN}\\
  &(p=0.016)  &  (m=2)  &(R=0.3) &    &     &             &              &   \\
\hline
$|V|$                 & 250 & 250 & 250 & 250 & 250 &  3732  & 2258  & 2258  \\ 
$|E|$                 & 511 & 496 & 696 & 249 & 249 &  82305 & 17695 & 17695\\ 
avg degree            & 4.09 & 3.96& 5.6& 1.99& 1.99&  44.1  & 15.67 & 15.67\\ 
avg shortest path     &4.09 & 3.47 & 9.68 & 7.45& 37.8 &  5.34  & 6.94  & 3.56 \\
avg clustering        &0.02& 0.06 & 0.56 & 0 & 0&  0.54  & 0.65  & 0.65 \\
\hline
\end{tabular}\caption{\small{Statistics for the networks considered in the experiments.}}\label{table:statistics}
\end{center}
\end{table*}
\begin{figure*}
\begin{center}
    \subfigure[$|\U|/N$ ($\varepsilon=0$)]{{\includegraphics[width=0.8\columnwidth]{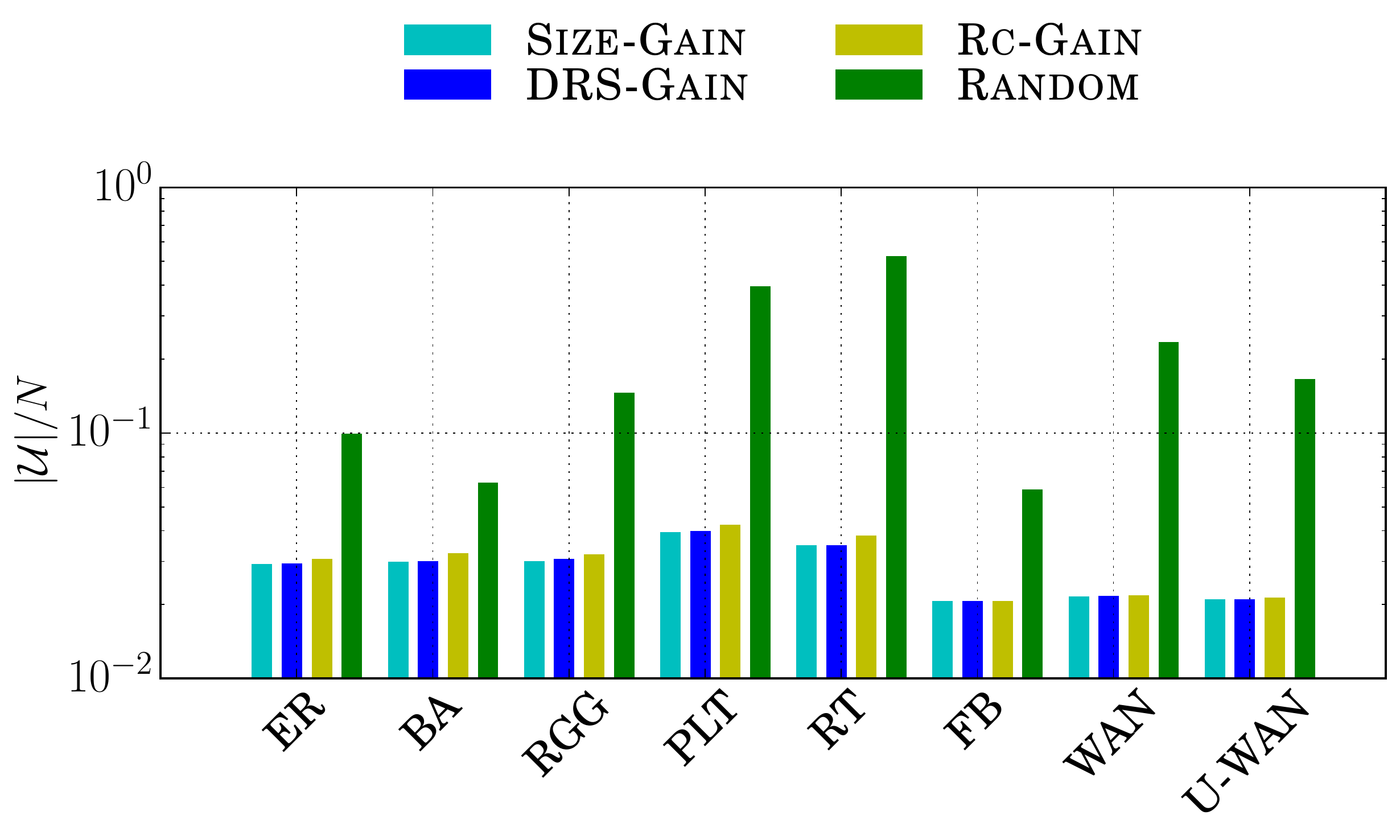}}\label{fig:online_comp_0}}
    \subfigure[$|\U|/N$ ($\varepsilon=0.2$)]{{\includegraphics[width=0.8\columnwidth]{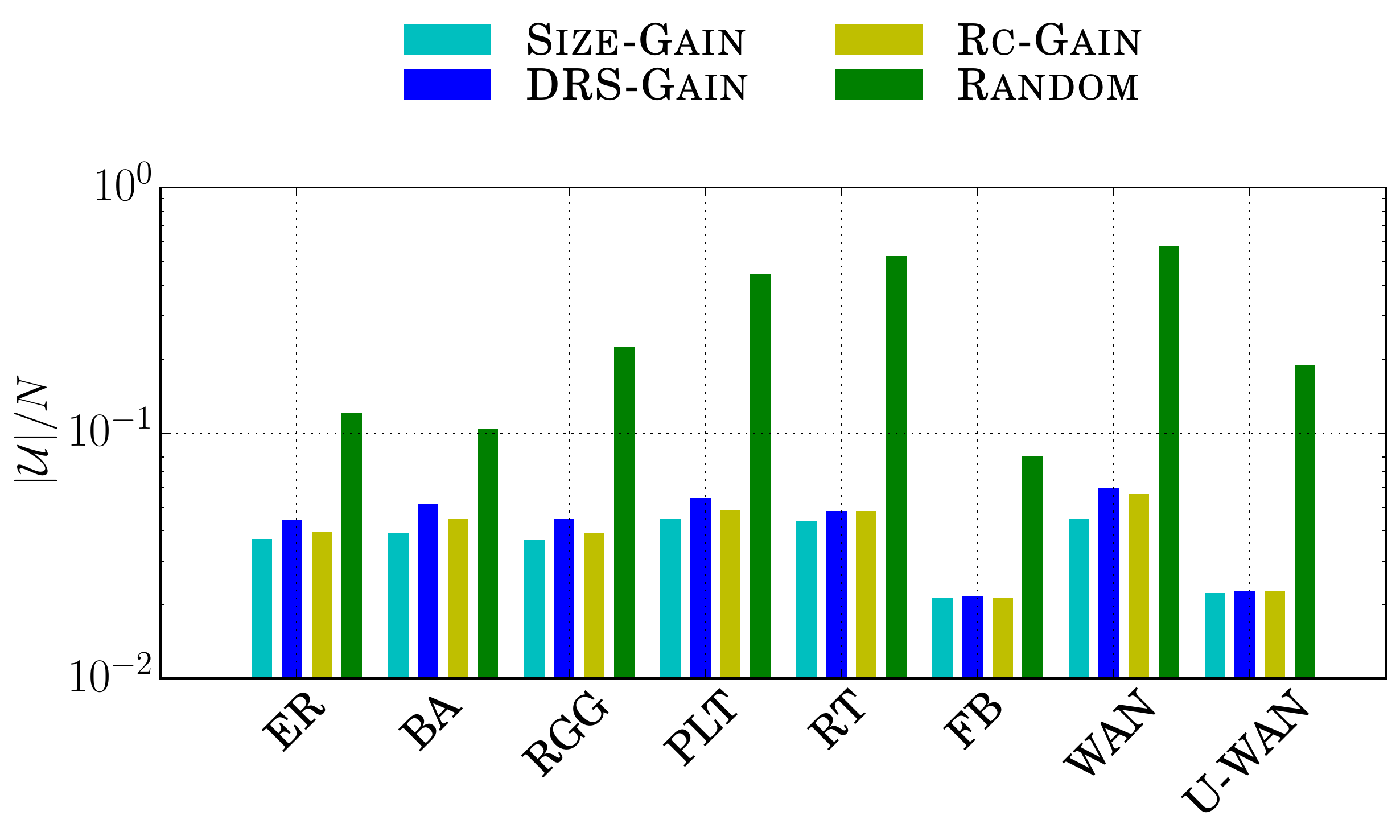}}\label{fig:online_comp_noise}}
 \caption{Relative cost of source localization.}\label{fig:multiple}
\end{center}
\end{figure*}

\noindent{\textbf{Correctness of Algorithm~\ref{algo}}}. The correctness of Theorem~\ref{theo:correct} also holds when the transmission delays are non-deterministic and is independent of the definition of \textsc{Gain}.

\begin{thm} \label{theo:correct_noise}
Let $\varepsilon \in [0,1]$ and $\theta_{uv}$ be a continuous random variable with support $[(1-\varepsilon)w_{uv}, (1+\varepsilon)w_{uv}]$ for every $uv \in E$. Moreover let the budget for dynamic sensors be unrestricted ($K_d= \infty$). Algorithm~\ref{algo} always returns $\{v^\star\}$.
\end{thm}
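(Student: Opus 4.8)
The plan is to mirror the two-part structure of the proof of Theorem~\ref{theo:correct}: first show that the true source $v^\star$ is never discarded, and then argue that every other node is eventually excluded, so that the loop can only terminate at $\B=\{v^\star\}$. The essential difference from the deterministic case is that, when $\varepsilon>0$, \textsc{Update} prunes $\B_i$ using only the \emph{necessary} conditions of Prop.~\ref{prop:recall_1} rather than the exact characterization of Prop.~\ref{prop:compatibility}; hence both parts must be re-derived from the $\varepsilon$-slack inequalities, and the non-trivial point is that these weaker conditions still pin down $v^\star$ uniquely.

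For the first part I would show $v^\star\in\B_i$ for every $i$. Since the observations are actually generated by an epidemic started at $v^\star$, each realized delay lies in $[(1-\varepsilon)w_{uv},(1+\varepsilon)w_{uv}]$, so every observation satisfies $t_\omega-t^\star\in[(1-\varepsilon)d(v^\star,u_\omega),(1+\varepsilon)d(v^\star,u_\omega)]$. Substituting these bounds into~\eqref{eq:pos_comp} and~\eqref{eq:neg_comp} at $v=v^\star$ collapses each to a statement of the form $|x|\le\varepsilon(a+b)$ with $x$ ranging exactly over $[-\varepsilon(a+b),\varepsilon(a+b)]$, so $v^\star$ meets the conditions of Prop.~\ref{prop:recall_1} and is kept at every step; the initial membership $v^\star\in\B_0$ follows the same way from Prop.~\ref{prop:recall_1}(1).

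For the second part (progress) I would use that the budget is unbounded: each iteration adds a fresh node to the sensor set ($\C_{i+1}=\C_i\setminus\{d_i\}$), so every node is eventually chosen, and as the epidemic keeps spreading it is eventually observed with its true infection time; in particular both $v^\star$ and any fixed $v\neq v^\star$ are eventually observed. The key new ingredient, replacing the appeal to Lemma~\ref{lemma:resolv}, is that the source is infected strictly before every other node: for $v\neq v^\star$ the infection must traverse at least one edge, whence $t_v-t^\star>0$. Instantiating~\eqref{eq:pos_comp} with $\omega_1=(v^\star,t^\star)$, $\omega_2=(v,t_v)$ and candidate $v$, the left-hand side equals $d(v^\star,v)+(t_v-t^\star)$, which strictly exceeds $d(v^\star,v)\ge\varepsilon\,d(v^\star,v)$, the right-hand side. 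Thus the necessary condition fails and $v\notin\B_i$. Since $v^\star\in\B_i$ always, the loop can halt (with $|\B_{i-1}|=1$) only at $\{v^\star\}$, and it does halt because every $v\neq v^\star$ is eventually removed; crucially, this reasoning never refers to \textsc{Gain}, giving the claimed independence.

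I expect the delicate point to be isolating, for each $v\neq v^\star$, a single pair of observations whose necessary condition is violated — the pair built from the source's own (earliest) observation does this uniformly — and, within it, the strict positivity $t_v-t^\star>0$ at the endpoint $\varepsilon=1$. There the deterministic lower bound $(1-\varepsilon)d(v^\star,v)$ degenerates to $0$, so one must instead invoke that the delays are drawn from continuous distributions, whence each traversed edge contributes a strictly positive delay with probability one; for every $\varepsilon<1$ the bound $t_v-t^\star\ge(1-\varepsilon)d(v^\star,v)>0$ holds for all realizations and the argument is fully deterministic.
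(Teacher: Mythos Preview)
Your proposal is correct and follows essentially the same route as the paper: first invoke Prop.~\ref{prop:recall_1} to keep $v^\star$ in every $\B_i$, then eliminate each $v\neq v^\star$ by instantiating~\eqref{eq:pos_comp} with the pair of observations $(v^\star,t^\star)$ and $(v,t_v)$, using $t_v>t^\star$. Your treatment is in fact slightly more careful than the paper's at the endpoint $\varepsilon=1$, where the strict positivity of $t_v-t^\star$ indeed relies on the continuity of the delay distributions rather than on the lower bound $(1-\varepsilon)d(v^\star,v)$.
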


\begin{proof}
The proof follows the structure of that of Theorem~\ref{theo:correct}. 
First note that nodes are removed from the set of candidate sources if and only if they do not satisfy some of the necessary conditions expressed by inequalities~\eqref{eq:pos_comp} and~\eqref{eq:neg_comp}. Hence, because of Proposition~\ref{prop:recall_1}, the source $v^\star$ is never removed from the set of candidates.
Next, we want to prove that, for every node $v\neq v^\star$, there exists a node $w \in V$ such that, when the infection time of $w$ is observed, $v$ is removed from the set of candidate sources. Take $w=v^\star$ and suppose that its infection time $t_w$ is observed. Let $v\neq w$ be another node for which the infection $t_v$ time is also observed. As $w=v^\star$, we have $t_v > t_w$. Note that inequality~\eqref{eq:pos_comp} cannot hold for $v$ and $w$: Indeed, we would have $0 < (1-\varepsilon)d(v, w) \leq t_w - t_v < 0$, which gives a contradiction. Let $i\in \mathbb{N}^+$ such that $w, v \in \S \cup \D_i$ and such that $t_v$ is smaller than $\tau_i$. Then, $v\notin \B_i$.
\end{proof}

\noindent\contour{black}{\textsc{Gain}}. Building on the deterministic case, we can compute an approximate version of the \textsc{Size-Gain} value $g^{SIZE}_i(c)$ for the case in which $\varepsilon\neq 0$. For the details of this computation see Appendix~\ref{app:size_gain}. \textsc{DRS-Gain} and \textsc{RC-Gain} do not depend on the epidemic model, hence remain unchanged with respect to Section~\ref{sec:det_case}.

\medskip
\noindent\textbf{Approximate Source Localization.} When $K_d < \infty$ and the convergence of the algorithm is not guaranteed, we could consider substituting $\varepsilon$ with $\tilde{\varepsilon} = C \varepsilon$, $0 < C \leq 1$, in inequalities~\eqref{eq:pos_comp} and~\eqref{eq:neg_comp}. Here, $C$ plays the role of a tolerance constant. Intuitively, when $C$ is small, we quickly narrow the candidate sources set, but the probability that the correct source is not identified by the algorithm is high; when $C$ is large, the probability that the algorithm identifies the real source as a candidate source is high, but possibly we have many false positives. The setting $C<1$ can be interesting for the case in which the transmission delays $\theta_{uv}$ are not uniform, e.g., when the delays are more concentrated around their expected value values. A study of this extension is left for future work.

\section{Experimental Results}\label{sec:experiments}
We evaluate the methods described in Section~\ref{sec:det_case} and~\ref{sec:noisy_case}.

\subsection{Experimental Setup}\label{sec:exp_setup}
In our experiments, the \textit{transmission delays} are \textit{uniformly distributed}. The uniform distribution is, among the unimodal distributions on a bounded support, the one that maximizes the variance~\cite{Gray}. Hence, uniform delays are a very challenging setting for source localization. 

The choice of static sensors is inspired by the work of Spinelli et al.~\cite{celis16}, where static sensor placement is extensively studied. We let $\S=k$-$DRS$ with $k=K_s$ (see Section~\ref{sec:preliminaries}), so that the number of nodes that are \textit{distinguished} by the static sensors is maximized.\footnote{The optimal choice of the static sensors  depends on the objective considered. For example, an alternative goal might be to minimize the expected time before the first static sensor is infected, for which one would choose a {$K_s$-Median}~\cite{Kariv79} set as $\S$.}
We also do not evaluate the impact of the \textit{budget} $K_s$, rather we are concerned with decreasing total number of sensors $|\U|$. We set $K_s=0.02 \cdot N$.

A study of different static placement strategies and of the trade-off between $K_s$ and the timeliness of source localization is left for future work. 

We evaluate the performance of the different approaches in terms of the \textit{(relative) cost} of the sensor placement, i.e., the fraction $|\U|/N$ of the sensors used for localization. All results are averaged over at least $100$ simulations in which the position of the source is chosen uniformly at random. 

The \textit{placement delay} $\delta$, unless otherwise specified, is $\delta=1$. This means that the epidemic and the localization process have approximately the same speed, which we believe is a realistic assumption in many applications. Moreover, in Section~\ref{sec:experiments_details} we present an experiment that evaluates the effect of this parameter and in which $\delta=1$ emerges as a good trade-off between the cost of the algorithm and the time needed for detection (see Figure~\ref{fig:speed-diff}). 


\medskip
\noindent\textbf{Algorithms \& Baselines.} 
We study the performance of Algorithm \ref{algo} for \textsc{Size-Gain}, \textsc{Drs-Gain} and \textsc{RC-Gain} (see Section~\ref{sec:det_case}).

As recalled in Section~\ref{sec:preliminaries}, with a static sensor placement (i.e, $K_d = 0$), the minimum number of sensors
required to localize the source when the transmission delays are deterministic is the DMD of the network~\cite{ChenHW14}. Hence, we use DMD as one natural benchmark for the cost of our algorithm.
 
Moreover we compare with the following baselines: 
\begin{itemize}
\setlength{\itemsep}{2pt}
\setlength{\parskip}{-3pt}
\setlength{\parsep}{0pt}
\item \textsc{Random}. We run Algorithm~\ref{algo} but, at each step $i$, we select $d_i$ at random from $V\backslash(\S\cup \D_{i-1})$.
\item \textsc{AllStatic}.  In experiments where $K_d < N$, we compare the performance of Algorithm~\ref{algo} (with $K_s$ static and $K_d$ dynamic sensors) with an entirely static version of Algorithm~\ref{algo} where the budget for static sensors is $K^\prime_s=K_s+K_d$ and the budget for dynamic sensors is $K^\prime_d=0$.
\end{itemize}

\subsection{Network Topologies}

We consider both synthetic and real-world networks; the network properties and statistics are reported in Table~\ref{table:statistics}.

\medskip
\noindent\textbf{Synthetic networks.} We generated synthetic networks from the following classes: Erd\"{o}s-R\'{e}nyi networks (ER)~\cite{Erdos59},
Barab\'{a}si-Albert networks (BA)~\cite{Barab99}, Random Geometric Graph on the sphere (RGG)~\cite{Penrose03}, regular trees of degree $3$ (RT) and trees with power-law distributed node degree (PLT).
For each network class, $10$ connected instances of size $250$ with unit edge weights were independently generated. 

\medskip
\noindent\textbf{Real-world networks.} 
\noindent\textit{Facebook Egonets (FB)}. This dataset is a subset of the Facebook network, consisting of $3732$ nodes. It was obtained from the union of $10$ Facebook egonet networks~\cite{mcauley2012} after removing the ego nodes\footnote{The ego nodes were removed in order to ensure that the sampling of contacts across the nodes in the network is uniform.} and taking the largest connected component. We set all weights to $w=1$ as there is not a straightforward method for deriving realistic edge weights for this network.

\noindent\textit{World Airline Network (WAN)}. This network is obtained from a publicly available dataset~\cite{open_flights} that provides the aircraft type used for every daily connection between over three thousands airports. Using this data we can derive the number of seats available on each route daily. We preprocess the network by removing the connections on which less than $20$ seats per day are available and by assigning to each connection $(u, v)$ the average between the number of seats available from $u$ to $v$ and from $v$ to $u$. Also, we iteratively remove leaf nodes (for which we believe connections are not well represented in the dataset), and we obtain a network of $2258$ nodes. The definition of the edge weights is inspired by a work by Colizza et al \cite{colizza}. An edge $(u,v)$ is weighted with an integer\footnote{Integer weights actually make the problem \textit{more difficult} when $\varepsilon=0$ (because it is more difficult to distinguish among nodes based on their distances to the sensors); when $\varepsilon>0$ the problem is again harder because we consider a continuous distribution for the transmission delays.} approximation of the expected time between the infection of city $u$ and the arrival of an infected individual at city $j$ (see Appendix~\ref{app:weights} for details). This gives a very skewed weight distribution. Our experiments show that the diversity of the edge weights brings an additional challenge to source localization. In order to evaluate the impact of non-uniform weights, we also run our experiments on an unweighted version (U-WAN)  of this network (in which all weights are set to $1$).

%
\subsection{Results}\label{sec:experiments_details}
\noindent\textbf{Different \textsc{Gain} functions.} 
We study the effect of \textsc{Gain} on the performance of Algorithm~\ref{algo}. 
For each variant, i.e., \textsc{Size-Gain}, \textsc{DRS-Gain}, \textsc{RC-Gain}, and for the \textsc{Random} heuristic, we report the relative cost. 
We let $K_d=\infty$; hence, by Theorems~\ref{theo:correct} and~\ref{theo:correct_noise}, Algorithm 1 always localizes the source. 
We consider both a deterministic setting $(\varepsilon = 0)$ and a non-deterministic setting with $\varepsilon = 0.2$, which means that the transmission delays can deviate up to $20\%$ from their average value. 
The results are depicted in Figure~\ref{fig:online_comp_0}-\ref{fig:online_comp_noise}.  
We observe that for the real networks and $\varepsilon=0$ all proposed $\textsc{Gain}$ have similar performance. For FB and U-WAN, this is true also when $\varepsilon>0$. These are also the cases where our algorithm has the smallest cost, hence we conclude that, when source localization is less challenging, $\textsc{Gain}$ does not have a strong impact. In all other cases, \textsc{Size-Gain} consistently gives the best performance. 
The improvement with respect to \textsc{Drs-Gain} is most noticeable when $\varepsilon > 0$; indeed, in this setting \textsc{Drs-Gain} is outperformed by the simple \textsc{RC-Gain}. 
We attribute this to the fact that, when there is high variance in the transmission delays, splitting the candidate sources into subsets of nodes which have different average infection times (see the definition of \textsc{Drs-Gain} in Eq.~\eqref{eq:DRS_gain}), does not guarantee that we are able to distinguish them based on the observed infection times~\cite{celis16}. 
Instead, as mentioned in Section~\ref{sec:det_case}, \textsc{RC-Gain} enforces a continuous progress in shrinking the set of candidate sources.
Since \textsc{Size-Gain} emerges as the best \textsc{Gain} among those we consider, we will use it in the remaining experiments (unless otherwise specified).  
%
\begin{figure}
\begin{center}
\includegraphics[width=0.7\columnwidth]{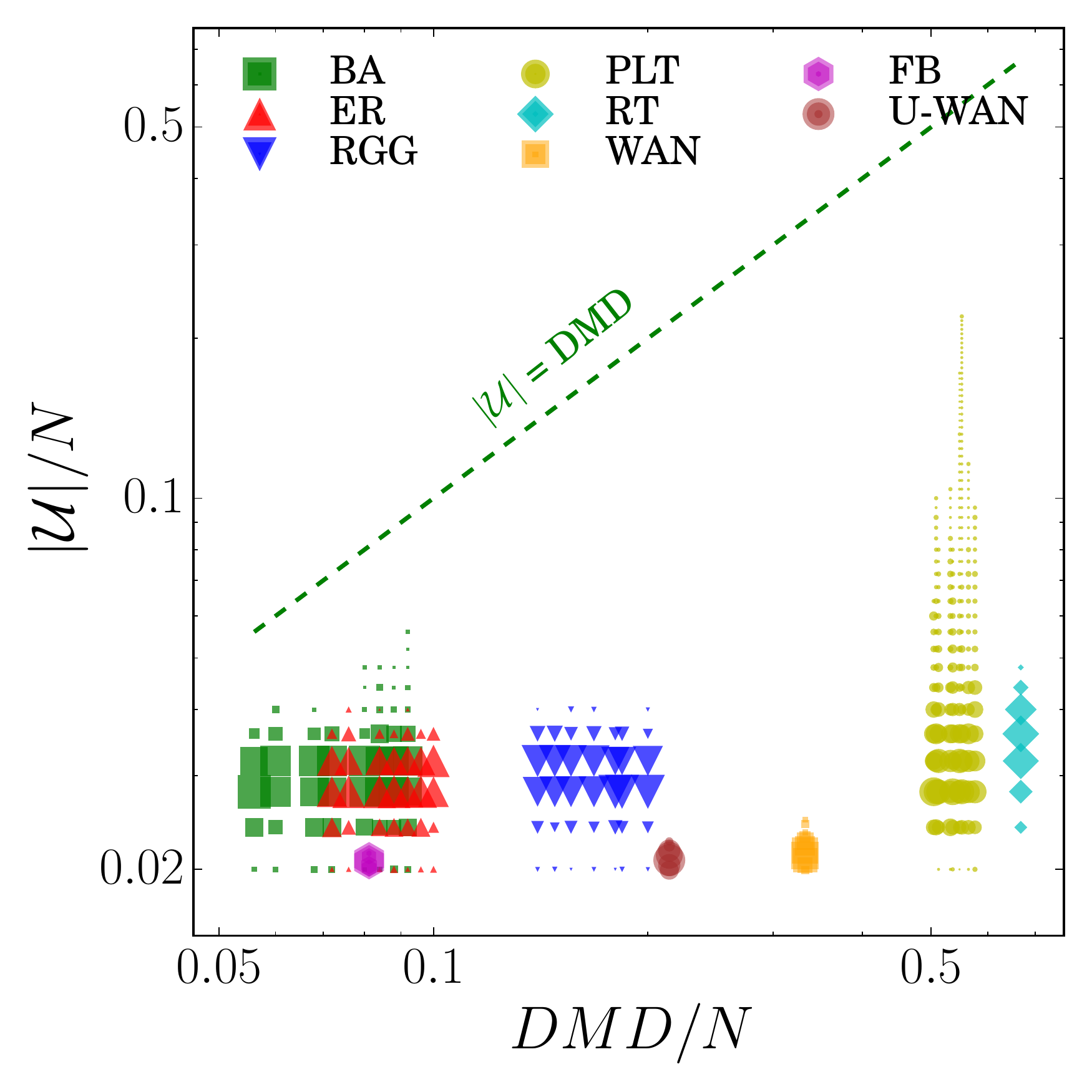}
\caption{Sensors needed for source-localization by Algorithm~\ref{algo} with \textsc{Size-Gain} and $\varepsilon=0$ compared with the number needed by an optimal offline placement (DMD). Larger markers represent higher concentrations of data points.}
\label{fig:scatter}
\end{center}
\end{figure}
\begin{figure}
\begin{center}
\includegraphics[width=0.9\columnwidth]{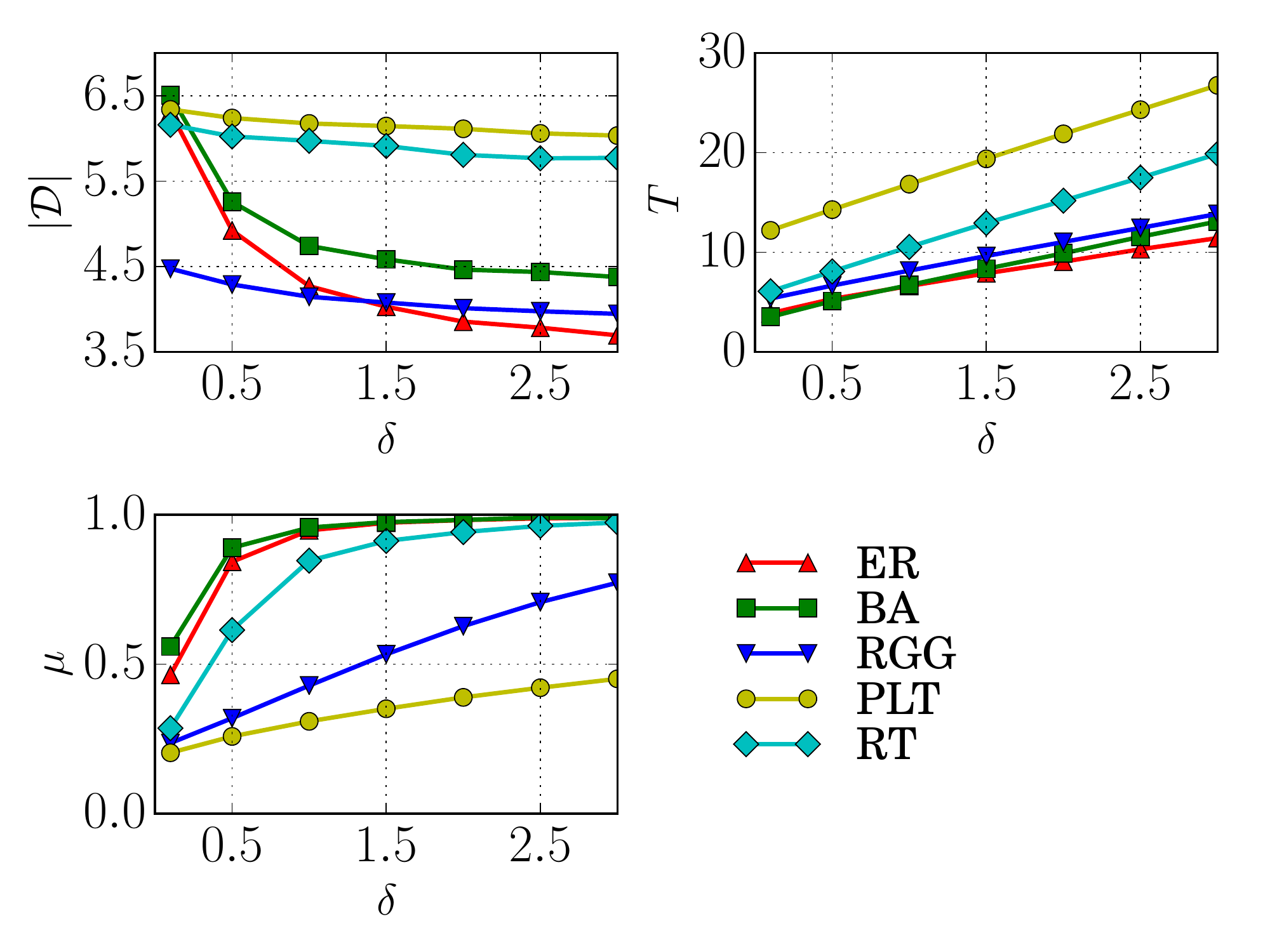}
\end{center}
\caption{\small{(Top left) Number $|\D|$ of dynamic sensors required to detect
the source for different values of the placement delay $\delta$; (Top right) Time $T$ (in time units) until localization; (Bottom) Fraction $\mu$ of infected nodes at localization time. The noise parameter is $\varepsilon=0.2$.}}
\label{fig:speed-diff}
\end{figure}

\textbf\textbf{DMD vs. Cost of Algorithm~\ref{algo}.} 
We now focus on the deterministic case ($\varepsilon =0$) when $K_d = \infty$, and compare $|\U|/N$ with the (approximate) DMD. 
We recall (see Section~\ref{sec:preliminaries}) that the DMD is the size of the optimal offline sensor placement for this setting. 
The results are depicted in Figure~\ref{fig:scatter}.
For all topologies, $|\U|/N$ is much smaller than $\mathrm{DMD}/N$. 
The improvement is particularly significant for trees where, on the one hand, DMD is very large (equal to the number of leaves~\cite{ChenHW14}) and, on the other hand, the topology makes it easy for our algorithm to rapidly narrow the search for the source to a small set of candidates.
\begin{figure}
\begin{center}
\subfigure[$\varepsilon=0$]{\includegraphics[width=0.49\columnwidth]{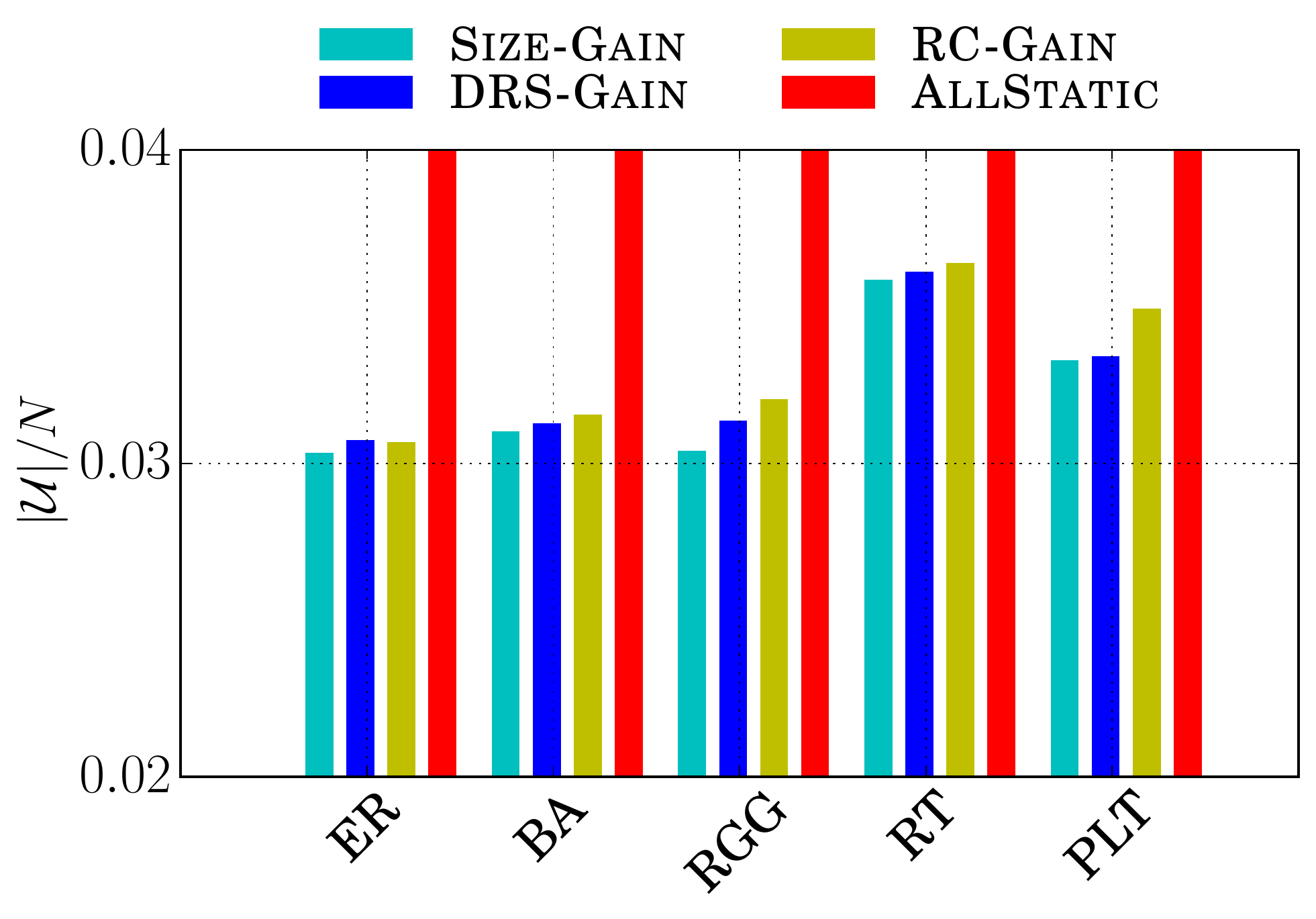}}
\subfigure[$\varepsilon=0.2$]{\includegraphics[width=0.49\columnwidth]{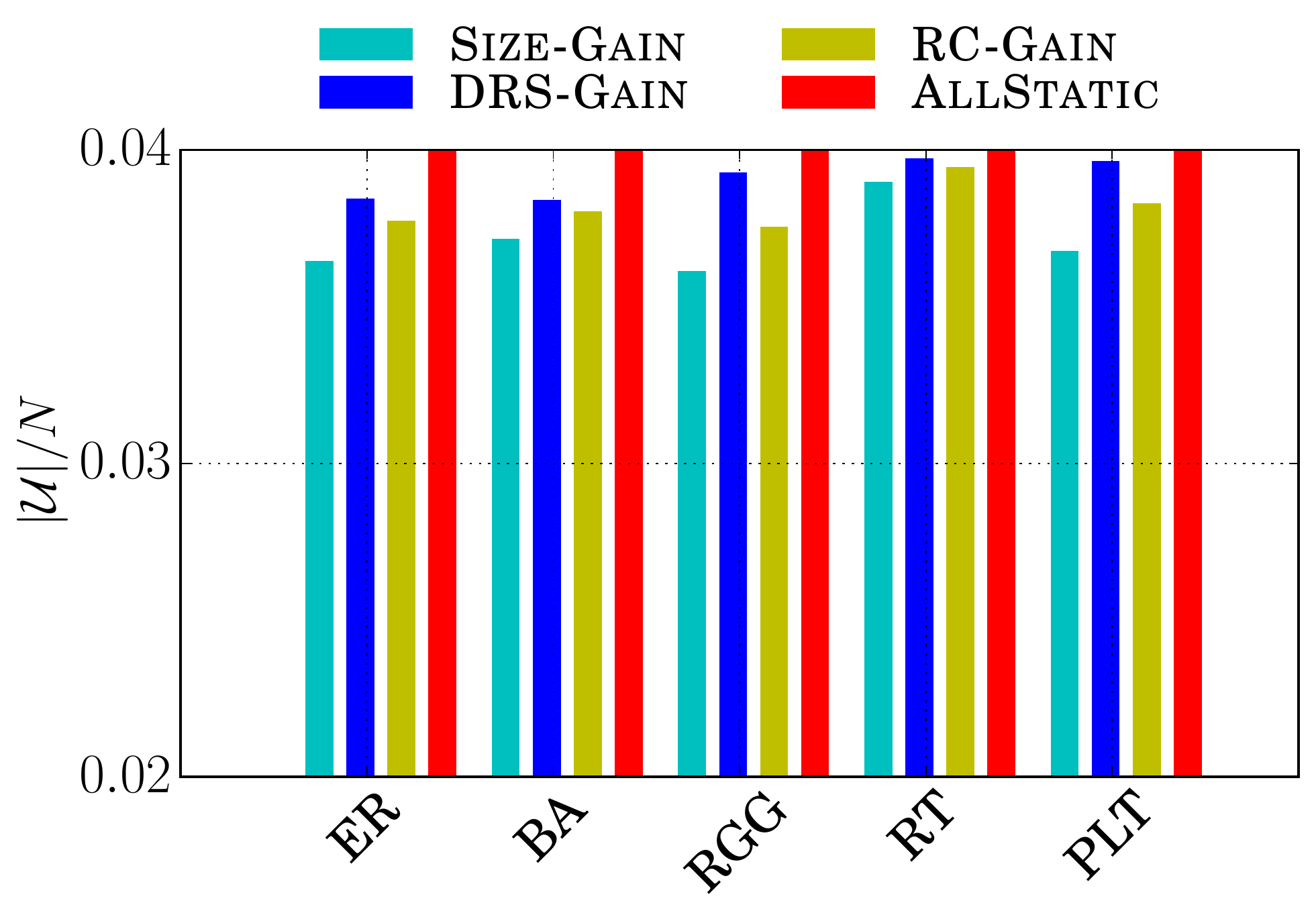}}\\
\subfigure[$\varepsilon=0$]{\includegraphics[width=0.49\columnwidth]{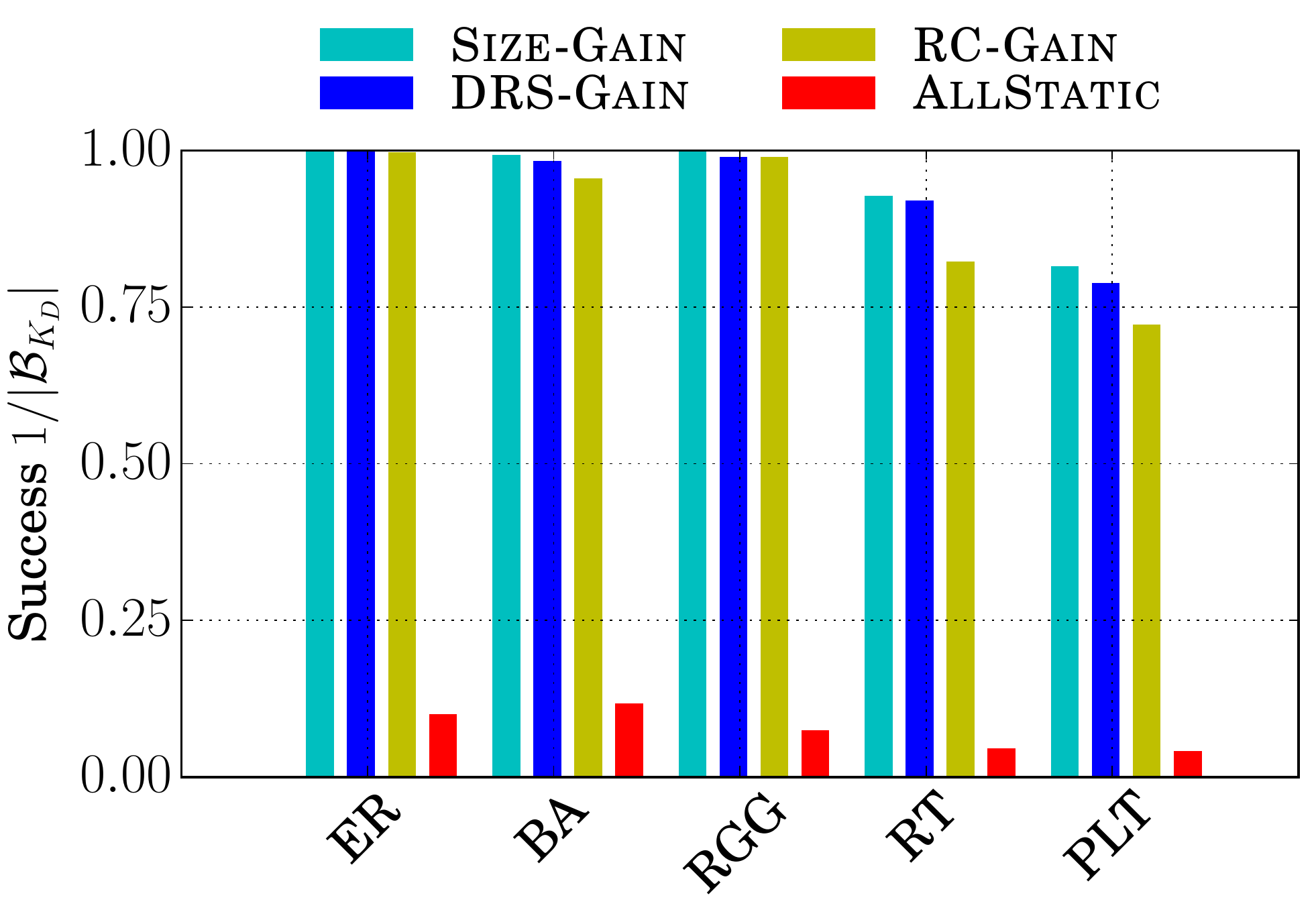}}
\subfigure[$\varepsilon=0.2$]{\includegraphics[width=0.49\columnwidth]{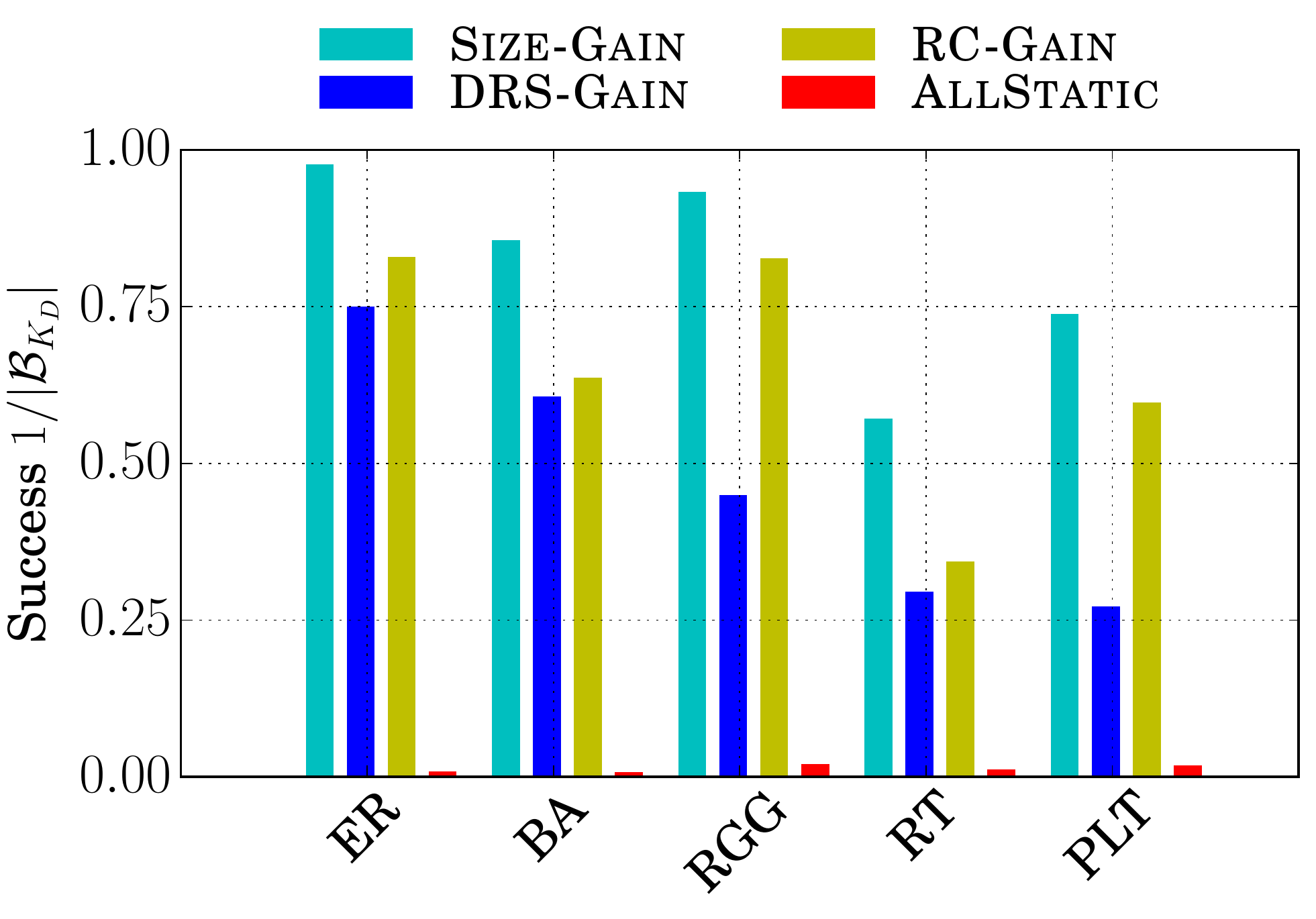}}
\end{center}
\caption{Average relative cost $|\U|/N$ and success $1/|\B_{K_d}|$ of source localization when $K_s = K_d = 0.02\cdot N$.}\label{fig:allstatic}
\end{figure}

\medskip
\noindent\textbf{\textsc{AllStatic} vs.~Algorithm~\ref{algo}.} 
We look at the performance of Algorithm~\ref{algo} when the budget for dynamic sensors is limited to a small fraction of nodes; we let $K_d = 0.02\cdot N = K_d$. 

We compare Algorithm~\ref{algo} with different \textsc{Gain} (\textsc{Size-Gain}, \textsc{DRS-Gain} and \textsc{RC-Gain}) against the \textsc{AllStatic} baseline with $K^\prime_d=0$ and $K^\prime_s=K_s+K_d=0.04\cdot N$ (see Section~\ref{sec:exp_setup}). 
As $K_d < \infty$, it is no longer guaranteed that we localize the source; instead we evaluate the \textit{success} of an algorithm with the metric $\sfrac{1}{|\B_{K_d}|}$, where $\B_{K_d}$ is the set of candidate sources at the last iteration step.
Hence, the success is $1$ when the source is localized (since $|\B_{K_d}|=1$), and is decreasing in the size of $\B_{K_d}$.
Note that $|\U| \leq 0.04 \cdot N$ and, in particular, $|\U| < 0.04 \cdot N$, only if the source was localized with fewer than $K_d$ dynamic sensors.
The results are presented in Figure~\ref{fig:allstatic}.
We observe that our approach outperforms the static sensor placement in terms of the budget used by the algorithm. Furthermore, for both $\varepsilon=0$ and $\varepsilon>0$, our algorithm gives a much higher success in source localization than \textsc{AllStatic}.
Among the \textsc{Gain} tested, \textsc{Size-Gain} is again the best one, giving both the higher success and the minimum cost.

\medskip
\noindent\textbf{Placement delay.} 
An important parameter used by Algorithm~\ref{algo} is the placement delay $\delta$, i.e., the time between two consecutive placements of a dynamic sensor.
On the one hand, the larger $\delta$ is, the smaller we expect the cost of our algorithm to be; on the other hand, the smaller $\delta$ is, the less time we expect to need for localizing the source, hence the fewer individuals are infected before we do so. 
We vary $\delta$ and look at the number $|\D|$ of dynamic sensors used, the fraction $\mu$ of infected individuals at the time of localization, and the time $T$ between the beginning of the epidemic and the localization of the source\footnote{To choose $\delta$, one must consider also the scale of edge weights, here, for simplicity of exposition, we ignore this aspect and experiment only with unweighted networks.} (see Figure~\ref{fig:speed-diff}).
%
%
We observe a trade-off between $|\D|$ and both $T$ and $\mu$. 
%
%

\medskip
\noindent\textbf{Cost of localization and size of $|\B_i|$ for real networks.} 
Finally, we evaluate the cost of localization in the practical setting of real networks with random delays. 
Moreover, to estimate how the running time varies for different values of the noise parameter and for the different topologies considered, we look at how the cardinality of the candidate set $\B_i$ defined by Eq.~\eqref{eq:B_i} decreases along the successive steps. 
We note beforehand that the approximate DMD is $303$ (around $0.08 \cdot N$) for the FB network, $751$ (around $0.3 \cdot N$) for WAN and $484$ for U-WAN. 
Hence, source localization is more challenging on the WAN network.
This is confirmed by the results shown in Figure~\ref{fig:cand_sources}. On the FB network, with noise parameter $\varepsilon=0.3$, the correct localization of the source is achieved with a total cost $|\U| \approx 0.025 \cdot N$ of sensors. The average number of sensors needed is slightly larger for the U-WAN network ($|\U| \approx 0.03\cdot N$). We attribute this effect to the presence of \textit{bottleneck} edges, i.e., edges that appear on many different shortest paths and make it difficult to estimate the source based on its distance to the sensors. This effect becomes even stronger with the weighted version of the WAN network (where the total cost needed is around $|\U| \approx 0.085 \cdot N$). This last result highlights that the high variability among the edge-weights makes source localization substantially more difficult, especially for $\varepsilon >0$ (see Figure~\ref{fig:multiple} for a comparison of the cost between deterministic and non-deterministic delays). Given the high regime of the noise parameter we consider and the small percentage of sensors deployed, we conclude that our algorithm outperforms most other approaches to source-localization, which either need more sensors or tolerate smaller amounts of noise.\\
\begin{figure*}
\begin{center}
\subfigure[FB]{\includegraphics[width=0.6\columnwidth]{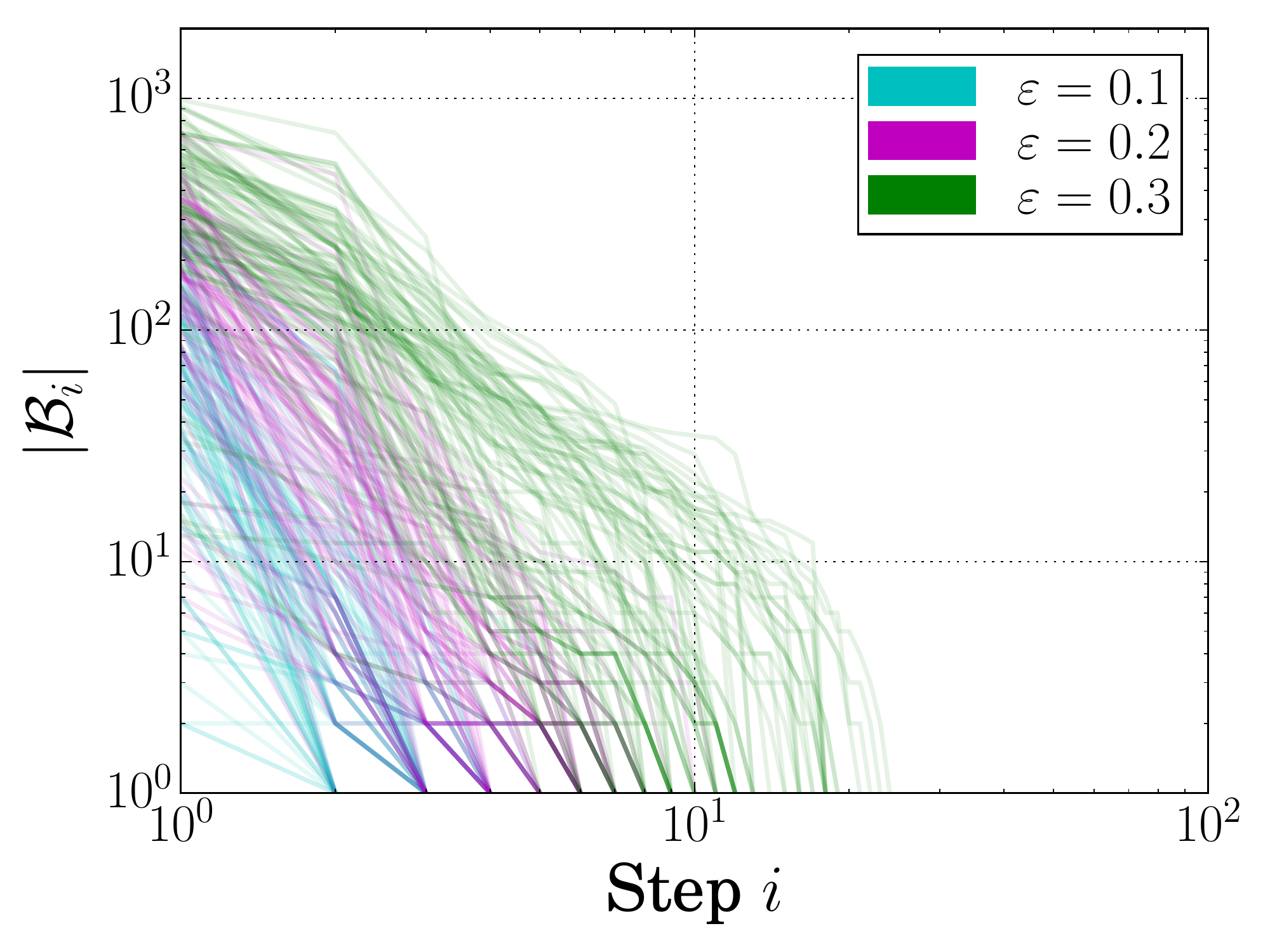}}
\subfigure[WAN]{\includegraphics[width=0.6\columnwidth]{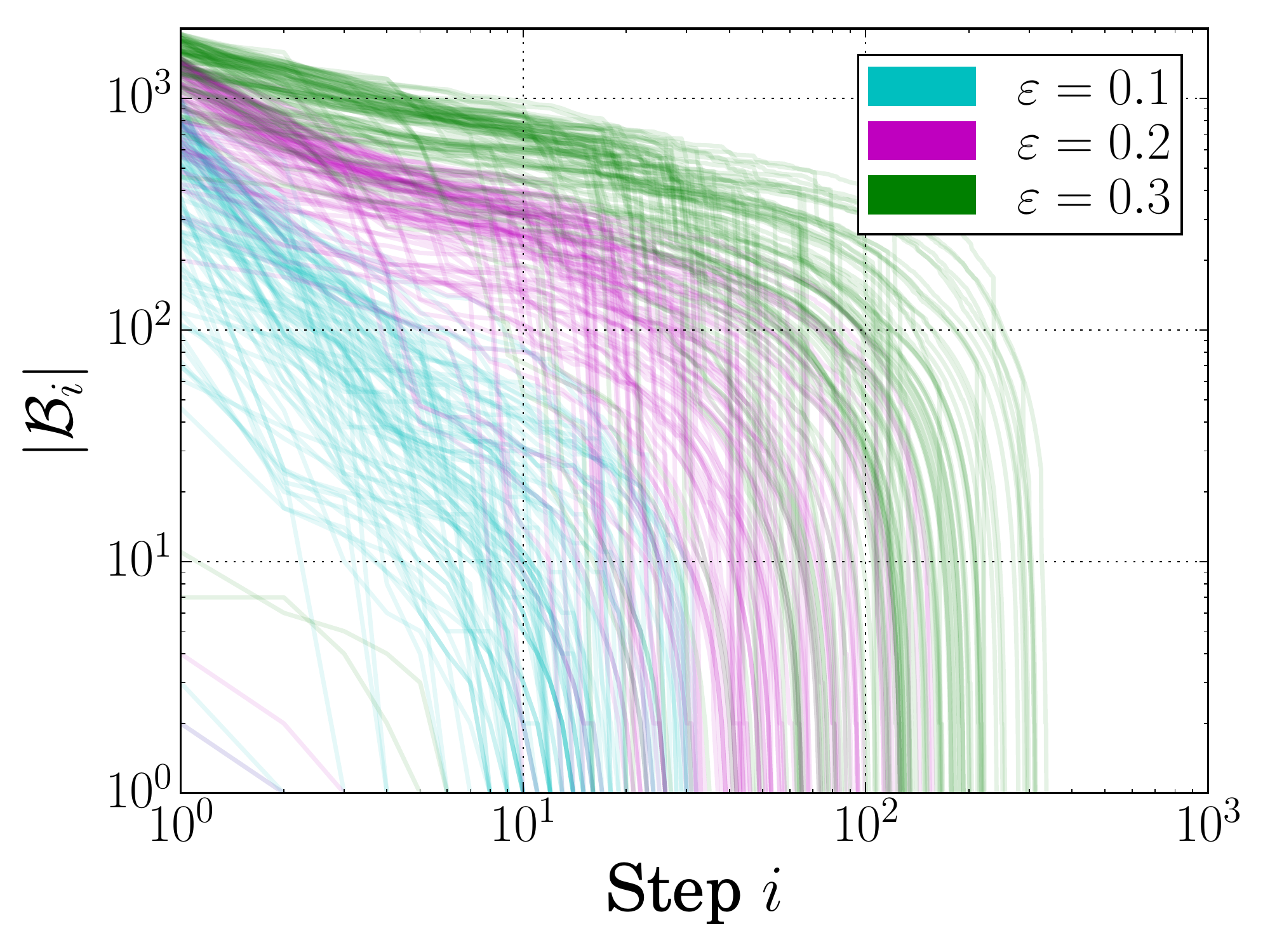}}
\subfigure[U-WAN]{\includegraphics[width=0.6\columnwidth]{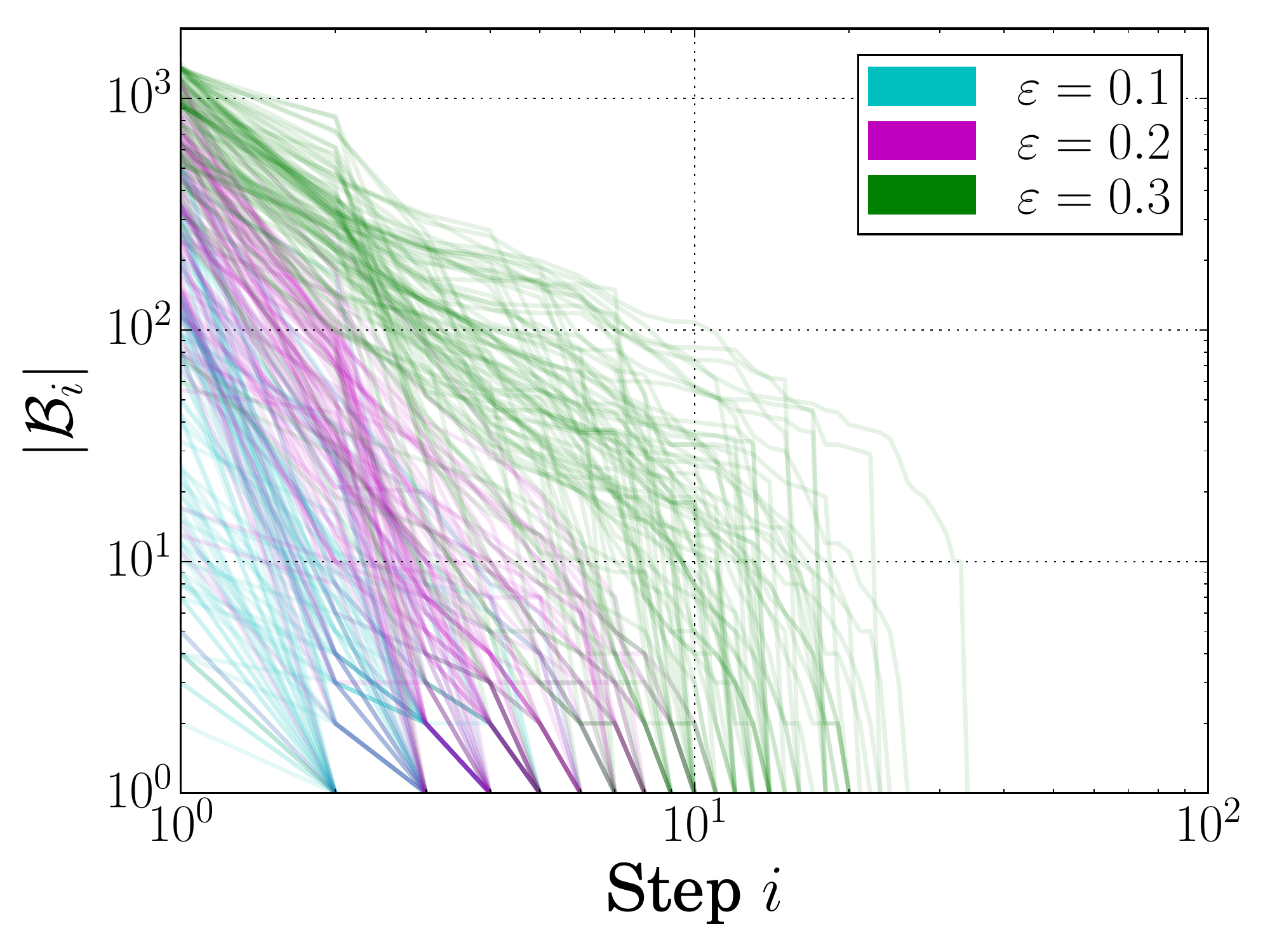}}
\end{center}
\caption{Cardinality of the candidate sources set $\B_i$ at successive steps of the algorithm.}\label{fig:cand_sources}
\end{figure*}

\section{Related work} \label{sec:related_work} 
We briefly review some important contributions to source localization (see \cite{jiang-survey} for an in-depth discussion).

\medskip
\noindent\textbf{Complete observation.} The first source-estimator was proposed by Shah and Zaman \cite{Shah} in 2009. This work, and many others that followed, rely on what is often called a \emph{complete observation} of the epidemic (see Assumption \ref{assume:complete} in Section \ref{sec:intro})~\cite{zheng2015, prakash, sundareisan2015hidden}. In these models, the source is estimated by maximum likelihood estimation (MLE). 

The results of~\cite{Shah} have been extended in many ways, e.g., to the case of
multiple sources~\cite{luo2012} or to obtain a \emph{local} source estimator
\cite{dong2013}. An alternate line of work that also uses Assumption~\ref{assume:complete}, allows the observed states to be \emph{noisy}, i.e., potentially inaccurate. 
For example, a model in which it is not possible to distinguish between susceptible and recovered nodes was studied by Zhu et al.~\cite{zhu2013}. 

\medskip
\noindent\textbf{Partial observation.}
Follow-up work considers a \emph{partial observation} setting where a randomly-selected fraction of nodes reveal their state~\cite{lokhov2014, zhu2014, luo2013, wang2015}. These works do not assume that the infection times are known (see Assumption \ref{assume:inf_times}), hence they need a large fraction of the nodes to be sensors (typically more than $30\%$) to localize the source.

\medskip
\noindent\textbf{Static sensor placement.}
Other works address the problem of strategically selecting sensor nodes \textit{a-priori}, i.e., finding a \textit{static} sensor placement. In the deterministic setting (see Assumption~\ref{assume:det_time}) some works considered the problem of \emph{minimizing} the budget required for detecting the source. This question is similar to the one we address, except that we allow random transmission delays and, most importantly, we propose an online solution. On trees, under \ref{assume:time} and~\ref{assume:det_time}, the minimization of the number of sensors has been studied~\cite{Zejn13}. Without \ref{assume:time} and \ref{assume:topology}, but with~\ref{assume:det_time}, approximation algorithms have been developed by Chen et al.~\cite{ChenHW14}. 

\medskip
\noindent\textbf{Budgeted sensor placement.} In a network of $N$ nodes, the minimal budget required for source-localization can go up to $N-1$, in which case the result of Chen et al.~is not practical. Hence, researchers have looked into a \textit{budgeted} version of the problem, i.e., how to place sensors given that only a limited number of them is available. In this direction, ``common sense'' approaches, e.g., using high-degree vertices, or  centrality measures were first evaluated~\cite{Pinto12, Louni14}. 
Later, the budgeted optimization problem was solved on trees~\cite{celis15} \ref{assume:topology}. Without~\ref{assume:topology}, a heuristic approach, based on the definition of Double Resolving Set of a graph (see Section~\ref{sec:preliminaries}), has been shown to outperform all previous heuristics~\cite{celis16}.
Due to budget restrictions, none of the works mentioned above can guarantee exact source localization.

\medskip
\noindent\textbf{Sequential sensor placement}. 
Working under~\ref{assume:det_time} and~\ref{assume:time}, Zejnilovic et al.~\cite{zejn2015}, proposed an algorithm that sequentially places sensors in order to localize the source \textit{after} the epidemic has spread through the entire network. Adopting very different techniques, we propose a solution that selects the sensors \textit{while} the epidemic evolves, enhancing both cost- and time-efficiency. Moreover, our approach works without~\ref{assume:det_time} and \ref{assume:time}.




\medskip
\noindent\textbf{Transmission delays.} Several models for how the epidemic spreads have been studied~\cite{Lelarge2009}. 
Discrete-time transmission delays were initially very common (see Assumption \ref{assume:det_time})~\cite{luo2013, prakash, Altarelli2014}. Then, to better approximate realistic settings, continuous-time transmission models with varying distributions for the transmission delays have been adopted; e.g., exponential~\cite{Shah, luo2012}, 
Gaussian~\cite{Pinto12,Louni14, Louni15, Zhang2016} or truncated Gaussians~\cite{celis16}. 
We consider general continuous bounded-support distributions that are tractable but yet versatile. 

\medskip
\noindent\textbf{Other related work.} Two-stage resource allocation is also studied in the context of \textit{robust optimization} where, to reach some objective, we allocate a-priori only a part of the resources and another part is deployed, at a higher cost, when more information is available~\cite{gupta2010}. Another related line of work in the Artificial Intelligence field is that of \textit{active learning} which studies how one can, based on sparse data, adaptively take a sequence of decisions in order to optimize a given objective~\cite{golovin2011}.

\bibliographystyle{abbrv}
\bibliography{www17_arxiv}

\clearpage
\noindent\Large\textbf{APPENDIX}
\appendix
\normalsize

\section{$k$-DRS}\label{app:kdrs}

We approximate the $k$-DRS set following the approach of Spinelli et al.~\cite{celis16}. The underlying idea to this approach is that any set $W \subseteq V$ partitions $V$ in a set of equivalence classes in the following way: any two nodes $u, v \in V$ are equivalent if for all $w_1, w_2 \in W$, $d(u, w_1) - d(u, w_2) = d(v, w_1) - d(v, w_2)$. Clearly, if $W$ is a DRS, we have $n$ equivalence classes, each consisting of only one node. A $k$-DRS is a set that maximizes the number of equivalence classes among the sets of cardinality smaller or equal than $k$. 
Computing a $k$-DRS is NP-hard, hence we use a greedy approximation.

For every $v \in V$ we initialize $W_v=\{v\}$ and add for $k-1$ times the node that maximize the number of equivalence classes in which $V$ is partitioned. We then choose the set $W_v$ that maximizes the number of equivalence classes as approximation of $k$-DRS.

\section{Approximate \textsc{Size-Gain} for the non-deterministic case}\label{app:size_gain}

When epidemics spread deterministically, Prop.~\ref{prop:gain} shows that, for any candidate sensor $c$, the probability of it being infected at time $h$, $\mathrm{P}(t_c = h|\mathcal{O}_{i-1})$, can be computed as the probability of $v^\star$ being a node such that $c$ is infected at time $h$. Prop.~\ref{prop:exp_gain_noisy} gives a generalization for the non-deterministic case.
\begin{prop}\label{prop:exp_gain_noisy}
Let $t_c$ be the infection time of $c\in \C_i$ and ${t}_c^\prime, {t}_c^{\prime\prime}$ the minimum and maximum values for $t_c$ given $\mathcal{O}_{i-1}$, then
\begin{multline*}
{t}_c^\prime \geq \min_{v \in \B_{i-1}} \Big(\max_{\omega \in \mathcal{O}_{i-1}, t_\omega\neq\emptyset} \Big\{d(c, v) - d(u_\omega, v) + t_\omega -\\ \varepsilon(d(c, v) + d(u_\omega, v))\Big\}\Big),
\end{multline*}
\begin{multline*}
{t}_c^{\prime\prime} \leq \max_{v \in \B_{i-1}} \Big(\min_{\omega \in \mathcal{O}_{i-1}, t_\omega\neq\emptyset} \Big\{d(c, v) - d(u_\omega, v) + t_\omega + \\ \varepsilon(d(c, v) + d(u_\omega, v))\Big\}\Big)
\end{multline*}
\end{prop}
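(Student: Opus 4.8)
The plan is to bound the infection time $t_c$ of candidate sensor $c$ by reasoning about what constraints each past positive observation imposes, given that the true source $v^\star$ must lie in $\B_{i-1}$. The key fact is that when the source is $v$, the infection time of any node $u_\omega$ equals $t^\star + \text{(sum of delays along some path from $v$ to $u_\omega$)}$, and since each delay $\theta_{uv}$ lies in $[w_{uv}(1-\varepsilon), w_{uv}(1+\varepsilon)]$, the actual infection time of a node $x$ satisfies $(1-\varepsilon)d(v,x) \le t_x - t^\star \le (1+\varepsilon) d(v,x)$ (using that the shortest-path distance $d$ is built from the weights $w$). This is exactly the ingredient behind Proposition~\ref{prop:recall_1}, so I would invoke that style of inequality throughout.

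First I would fix a hypothesized source $v \in \B_{i-1}$ and an already-observed positive observation $\omega \in \mathcal{O}_{i-1}$ with $t_\omega \ne \emptyset$. Eliminating the unknown start time $t^\star$ between the bounds for $t_c$ and for $t_\omega$ gives a two-sided bound relating $t_c - t_\omega$ to $d(c,v) - d(u_\omega, v)$; concretely, $|t_c - t_\omega - (d(c,v) - d(u_\omega,v))| \le \varepsilon(d(c,v) + d(u_\omega,v))$, which rearranges to
\begin{equation*}
t_c \ge d(c,v) - d(u_\omega,v) + t_\omega - \varepsilon(d(c,v) + d(u_\omega,v)),
\end{equation*}
and a symmetric upper bound. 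Since \emph{every} positive $\omega$ must be consistent with the same source $v$, the tightest lower bound on $t_c$ for that fixed $v$ is the maximum of the per-$\omega$ lower bounds, and the tightest upper bound is the minimum of the per-$\omega$ upper bounds. This is where the inner $\max$ (resp.\ $\min$) over $\omega$ in the statement comes from.

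Next I would handle the quantification over the source. The true source is some node in $\B_{i-1}$, but we do not know which; any $v \in \B_{i-1}$ is a priori possible. The actual value of $t_c$ is achievable only for the genuine source, so the globally smallest value $t_c$ can take across all admissible sources is the minimum over $v$ of the per-$v$ lower bounds, giving $t_c^\prime \ge \min_{v} \max_{\omega} (\cdots)$; dually, $t_c^{\prime\prime} \le \max_{v} \min_{\omega} (\cdots)$. The directions of the outer optimizations are the subtle part: because we seek the extreme \emph{feasible} value of $t_c$ over the union of source hypotheses, the minimum (resp.\ maximum) over $v$ is what yields a valid bound on the range $[t_c^\prime, t_c^{\prime\prime}]$.

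The main obstacle I expect is getting the interplay of the $\max$ and $\min$ quantifiers exactly right and justifying that these are only bounds (inequalities) rather than equalities. The inequalities in the statement are one-directional precisely because, with $\varepsilon > 0$, the per-edge delays are not pinned down and the distance $d(c,v)$ is a shortest-path length that the actual infection path need not realize; moreover, pairwise consistency of observations does not guarantee global consistency (as noted after Prop.~\ref{prop:recall_1}), so one cannot claim a node attaining the bound is truly feasible. I would therefore be careful to derive each inequality as a \emph{necessary} condition on $t_c$ and take suprema/infima accordingly, without asserting tightness.
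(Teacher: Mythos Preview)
Your approach is correct and essentially identical to the paper's: fix a hypothetical source $v\in\B_{i-1}$, derive the per-observation inequality on $t_c$ from the bounded delays, take the $\max$ over positive observations, and then the $\min$ over $v$ since the true source is unknown in $\B_{i-1}$. The paper's proof is terser (it omits the explicit elimination of $t^\star$ and your discussion of why the bounds are only inequalities), but the argument is the same.
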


\begin{proof}
We prove the bound for $t_c^\prime$, the one for $t_c^{\prime\prime}$ is analogous. Take $v \in \B_{i-1}$. If $v^\star=v$, then for every $(o_j, t_j) \in \mathcal{O}_{i-1}$ 
$$t_c^\prime \geq  d(c, v) - d(o_j, v) + t_j - \varepsilon(d(c, v) + d(o_j, v)),$$
hence
\small
$$t_c^\prime \geq \max_{(o_j, t_j) \in \mathcal{O}_{i-1}} \Big\{d(c, v) - d(o_j, v) + t_j - \varepsilon(d(c, v) + d(o_j, v))\Big\}.$$
\normalsize
The bound follows then from the fact that $v^\star$ can be equal to any node in $\B_{i-1}$.
\end{proof}

For $h\in [{t}_c^\prime, {t}_c^{\prime\prime}]$, let $a_i(c, h)$ be the set of nodes $v$ that satisfy \eqref{eq:pos_comp} and \eqref{eq:neg_comp} with $v^\star=v$ for all observations in $\mathcal{O}_{i-1} \cup \{(c, h)\}$,
and let $\tilde{a}_{i}(c)$ be the set of nodes $v$ that satisfy \eqref{eq:neg_comp} at time $\tau_i$ for all observations in $\mathcal{O}_{i-1} \cup \{(c, \emptyset)\}$. Then we define

\begin{equation}\label{eq:gain_noisy}
\begin{split}
g_i^{\textsc{size}}(c) = &\int_{\min(t_c^\prime, \tau_i)}^{\min(t_c^{\prime\prime}, \tau_i)}(|\B_{i-1}| - |a_i(c,
h)|)f_{t_c}(h) dh \\
&+ (|\B_{i-1}| - |\tilde{a}_{i}(c)|)(1 - F_{t_c}(\tau_i)),
\end{split}
\end{equation}
where $f_{t_c}(\cdot)$ denotes the density of the infection time $t_c$ of $c$
conditioned on $\mathcal{O}_{i-1}$ and $F_{t_c}$ is its cumulative function.

Let $(s_0, \tau_0) \in \mathcal{O}_0$ and, for $h \in \mathbb{R}$, let us denote by $J_h$ the interval $[h - \frac{1}{2}, h + \frac{1}{2}]$, by $J_h^\prime$ the interval $[h - \frac{1}{2} -\tau_0, h + \frac{1}{2} - \tau_0]$. 
In order to compute \eqref{eq:gain_noisy}, we make the following approximations:

\begin{enumerate}

\item we approximate the integrand with a stepwise constant function with steps
of unity length centered around the integer values in $[{t}_c^\prime, {t}_c^{\prime\prime}]$, i.e.

\small
\begin{multline*}
\mathrm{E}[g^{\textsc{size}}_i(c)] \approx \\ \sum_{h \in \mathbb{Z}, h\in[t_c^\prime, t_c^{\prime\prime}], h \leq \tau_i} (|\B_{i-1}| - |a_i(c,
h)|)\mathrm{P}\Big(t_c \in J_h|\mathcal{O}_{i-1}\Big) \\
+ (|\B_{m-1}| - |\tilde{a}_i(c)|)\mathrm{P}(t_c > \tau_i|\mathcal{O}_{i-1}));
\end{multline*}
\normalsize

\item we compute $\mathrm{P}(t_c \in J_h|\mathcal{O}_{i-1})$ by summing over $\B_{i-1}:$

\begin{equation*}
\begin{split}
&\mathrm{P}\Big(t_c \in J_h |\mathcal{O}_{i-1}\Big) = \\
&\sum_{v \in \B_{i-1}} \mathrm{P}\Big(t_c \in J_h\Big|
v=v^\star, \mathcal{O}_{i-1}\Big) \mathrm{P}(v=v^\star|\mathcal{O}_{i-1}).
\end{split}
\end{equation*}

In order to further limit the computational costs, if $\mathrm{P}(v=v^\star|\mathcal{O}_{i-1}) > 0$, we approximate $\mathrm{P}(v=v^\star|\mathcal{O}_{i-1}) \approx \pi(v)/\pi(\B_{i-1})$, i.e., we ignore the fact that, conditioned on the observations in $\mathcal{O}_{i-1}$ the probability of a node being the source can differ from the (rescaled) prior. Moreover, we approximate 
$\mathrm{P}(t_c \in J_h|\mathcal{O}_{i-1})$ as follows. We take $(s_0, \tau_0)$ as reference
observation\footnote{\small{In case of a large diameter network,
this choice could be optimized taking as reference the sensor $u$ (static or dynamic) which is
closer to the candidate source $v$; for a small-diameter network this would not
yield a substantial improvement.}} and we approximate $\mathrm{P}(t_c \in J_h|\mathcal{O}_{i-1}) \approx \mathrm{P}(t_c - \tau_0 \in J_h^\prime)$.\footnote{\small{If the time delays are all uniformly distributed with equal expected values, we can normalize $t_c - \tau_0$ to obtain a sum
of uniform $U([0,1])$ variables, i.e., an Irwin-Hall random variable and the
latter probability can be computed exactly. 
If time delays are uniformly distributed but with different expected values, the probability $\mathrm{P}(t_c - \tau_0 \in
J_h^\prime)$ is not easily computable~\cite{bradley2002}, hence we approximate the distribution of $t_s - \tau_0$ with a Gaussian
distribution with mean and variance equal to the mean and variance of $t_s -
\tau_0$. The latter Gaussian approximation can be used for generally distributed transmission delays.}}
\end{enumerate}

An important side-effect of the approximation of $\mathrm{P}(t_c \in J_h)$ is that the event $g^\textsc{size}_i(c) = |\B_{i-1}|$, i.e., no node is a valid candidate source after adding $c$, might have a positive \textit{weight} in the computation of $\mathrm{E}[g^\textsc{size}_i]$. Specifically, there might be a value of $h$ such that $\mathrm{P}(t_c - \tau_0 \in J_h^\prime)>0$ but $|a_{c, h}|=0$. 
This can lead our algorithm to slow down by choosing sensors that do not reduce the number of candidate sources. We address this problem applying the following heuristic: Whenever the number of candidate sources does not decrease in two consecutive steps we restrict the  choice of the new sensor to the set of candidate sources $\B_{i-1}$. In fact, if the infection time of at least one node in $\B_{i-1}$ is already observed, adding a sensor in any other node in $\B_{i-1}$ implies $|\B_{i}|\leq |\B_{i-1}|$. 

%
%
%


\section{Weights for the WAN network}\label{app:weights}

Our definition of the edge weights for the WAN network is inspired by the work of Colizza et al.~\cite{colizza}. 

Let $s_{ij}$ be the number of seats available on a flight from airport $i$ to airport $j$. The number of seats can be inferred by the aircraft with which the flight is operated~\cite{open_flights}. Moreover, let $\alpha = 0.7$ denote the average occupancy rate on a flight~\cite{colizza} and $N_i$ denote the population of city $i$. Then we approximate the probability that an individual flies from $i$ to $j$ as $\alpha s_{ij}/N_i$. 

Let $\theta$ be the probability that an individual is infected when the infection reached the city where he leaves. Then the probability that a sick individual travels from $i$ to $j$ is $1 - (1 - \alpha s_{ij}/N_i)^{\theta N_i}$.
Hence the average delay for the infection to spread from city $i$ to city $j$ can be estimated to be 
$$w_{ij} = [1 - (1 - \alpha s_{ij}/N_i)^{\theta N_i}]^{-1} \approx [1 - \exp^{-\alpha s_{ij}\theta}]^{-1}.$$
For our simulations we assumed $\theta=0.05$ and rounded all weights $w_{ij}$ to the closest integer. Figure \ref{fig:weight_distr} shows the resulting weight distribution (note the log-scale of the $y$-axis, hence the skewness of the distribution). Integer \textit{weights} are realistic in many applications because the average transmission times are usually known up to some level of precision.
Moreover, integer weights make it \emph{more} difficult to distinguish between vertices based on their distances to any pair of sensors. Indeed, if the weights are not integers, the DRS of a weighted graph can be very small, and hence be a very inaccurate measure of the difficulty of estimating the identity of the source with non-deterministic time transmission times.  
\begin{figure}[H]
\begin{center}
\includegraphics[width=0.8\columnwidth]{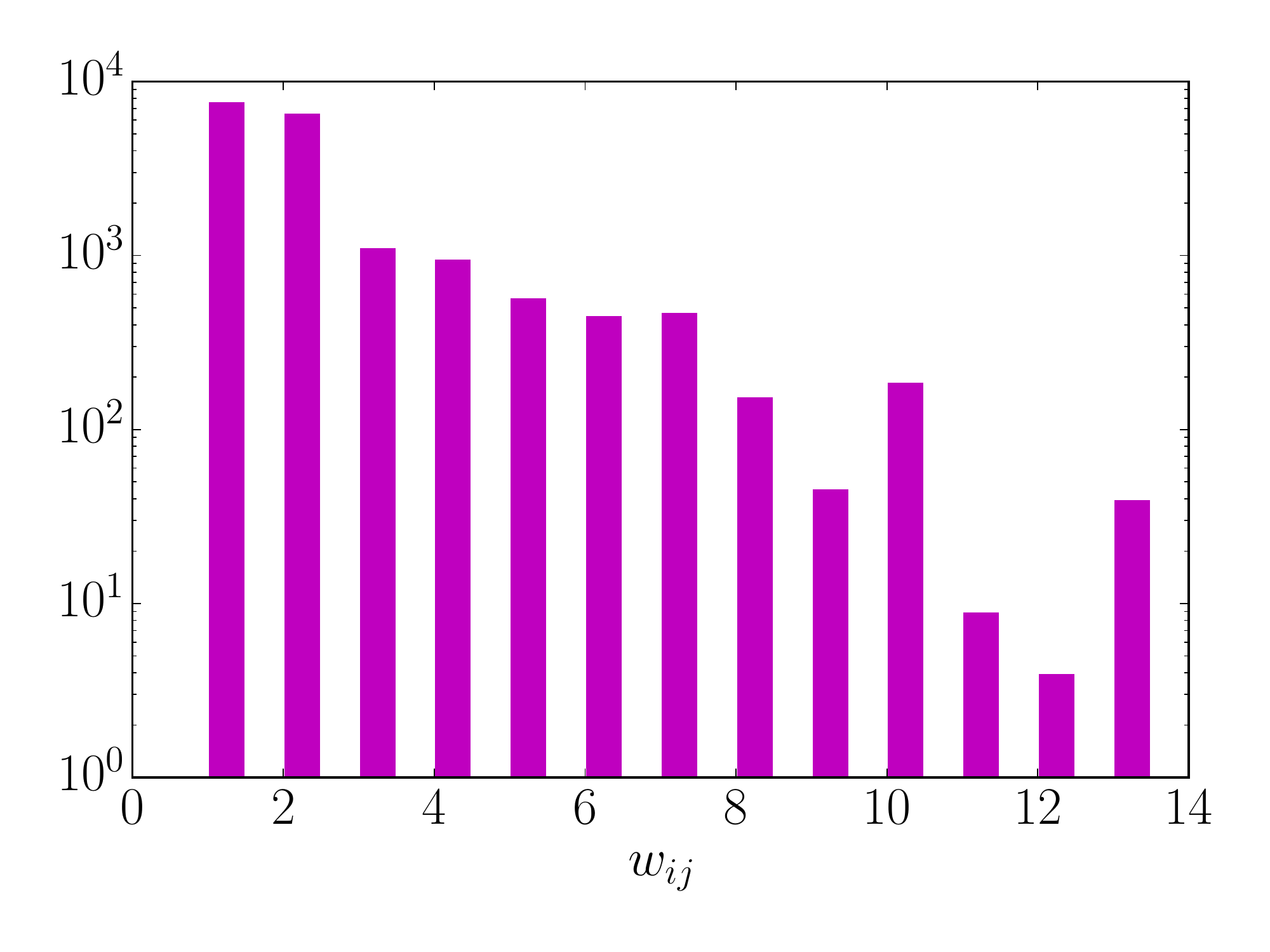}
\end{center}
\caption{\small{Histogram of edge weights for the WAN network.} }\label{fig:weight_distr}
\end{figure}

\end{document}